\tikzset{state/.style={circle,draw=blue}}
\newtheorem{theorem}{Theorem}
\newtheorem{corollary}[theorem]{Corollary}
\newtheorem{definition}[theorem]{Definition}
\newtheorem{example}[theorem]{Example}
\newtheorem{proposition}[theorem]{Proposition}
\newtheorem{remark}[theorem]{Remark}
\newtheorem{assumption}[theorem]{Assumption}
\newcommand{\field}[1]{\mathbb{#1}}
\newcommand{\R}{\field{R}}
\newcommand{\E}{\field{E}}
\newcommand{\norm}[1]{\left\lVert#1\right\rVert}
\newcommand{\ms}{\bar \sigma}
\newcommand{\veps}{\varepsilon}
\begin{document}

\begin{frontmatter}

\title{Broken Detailed Balance and Non-equilibrium Dynamics in Noisy Social Learning Models}	


\author[SUTD]{Tushar Vaidya}
\address[SUTD]{Singapore University of Technology and Design, 8 Somapah Road, Singapore 487372}
\ead{tushar\_vaidya@sutd.edu.sg}

\author[Chul]{Thiparat Chotibut\corref{mycorrespondingauthorTC}}
\cortext[mycorrespondingauthorTC]{corresponding author}
\address[Chul]{Department of Physics, Faculty of Science, Chulalongkorn University, 254 Phayathai Road, Pathumwan, Bangkok 10330 Thailand}
\ead{thiparatc@gmail.com, thiparat.c@chula.ac.th}

\author[SUTD]{Georgios Piliouras}
\ead{georgios@sutd.edu.sg}

\begin{abstract}
We propose new Degroot-type social learning models with noisy feedback in continuous time. Unlike the standard Degroot framework, noisy information frameworks destroy consensus formation. On the other hand, noisy opinion dynamics converge to the equilibrium distribution that encapsulates correlations among agents' opinions. Interestingly, such an equilibrium distribution is also a non-equilibrium steady state (NESS) with a non-zero probabilistic current loop. Thus, noisy information source leads to a NESS at long times that encodes persistent correlated opinion dynamics of learning agents. Our model provides a simple realization of NESS in the context of social learning. Other phenomena such as synchronization of opinions when agents are subject to a common noise are also studied. 
\end{abstract}

\begin{keyword}
Degroot Learning  \sep Social Learning \sep Non-equilibrium Steady State \sep Broken Detailed Balance, Synchronization of Opinions, Econophysics
\MSC[2010] 82B31  \sep  92D99 \sep 91A99 \sep 82C05 \sep 91B80
\end{keyword}

\end{frontmatter}
\section{Introduction}
A central quest in the field of opinion dynamics is to understand how opinion exchange among individuals in a social network can give rise to emergent social phenomena such as consensus formation, polarization, and fragmentation of opinions.
Due to advances in information technology and accessibility to massive social data, empirical studies of opinion dynamics have received increasing attention and provide insights into developing quantitative models of opinion dynamics \cite{Onnela:kb,schinckus2018ising,Leskovec:ta, simmons2011memes, FernandezGracia:2013kz,Das:ef, masuda2010collective}.
Even before the social media age, many theoretical models had been proposed to explain social phenomena.
Among the most studied ones are the Voter model \cite{Clifford:s-r5Vq0b,Holley:we} and its variants \cite{Castellano:2009bt, sood2008voter}, which describe dynamics of discrete opinions in a population, such as electoral votes for political parties, through the lens of interacting particle systems \cite{liggett1985interacting}. 
Although these models are highly idealized, tools from discrete classical spin models in statistical mechanics can be employed to elucidate that consensus formation is a collective phenomenon, similar to how a magnetic ordered phase can emerge from microscopic interactions among discrete-valued spins in magnetic systems \cite{Castellano:2009bt,krapivsky2010kinetic}. Despite a clear connection to discrete-spin systems, Ising-like models of opinion dynamics are not appropriate when opinions are real valued and continuous, especially in financial markets.  There, information is also exchanged simultaneously, not just pairwise.

In this work, we will focus on a class of models that concern the dynamics of continuous opinions, such as traders' beliefs on a stock value. The Econophysics literature tackles slightly different aspects of financial markets \cite{chakraborti2011econophysics,chakraborti2011econophysicsII}; here, we will not deal with limit order books which is a central theme in Econophysics \cite{abergel2011econophysics}. Instead, we will focus on how the beliefs of traders arise through microscopic interaction and learning. Stanley et. al \cite{stanley1996anomalous} summarized the  need to incorporate microscopic interactions to understand macroscopic behaviours in Econophysics. Our work follows this theme, investigating how microscopic interaction and learning give rise to collective phenomena in financial markets. 

One of the paradigmatic continuous opinion models that incorporate interaction is DeGroot's repeated linear updating model \cite{Degroot:1974aa}, see also the survey in \cite{masuda2017random}. In Degroot's model of continuous opinion dynamics, each agent $i \in \{ 1, 2, \dots, n \}$ holds a real-valued opinion $X_t^i \in [0,1]$ or in some compact domain. At each time step $t$, agents simultaneously update their opinions by taking the weight average of others' opinions: 
\begin{equation}\label{eqn: degroot}
X_{t+1}^i = \sum_{j=1}^{n} A_{ij} X_t^j,
\end{equation}
where $A_{ij}$ is a right stochastic matrix with non-negative entries. Its rows sum to one $\sum_{j=1}^{n} A_{ij} = 1$. An entry $A_{ij}$ shall be interpreted as an influence of agent $j$'s opinion on agent $i$, or the degree of trust that agent $i$ has on agent $j$'s opinion.  
This mathematical setup is equivalent to a Markov chain defined on a directed graph, where a non-zero entry of the influence matrix $A_{ij} > 0$ specifies a directed edge from node $j$ to node $i$. 
Depending on the social graph structure encoded in the influence matrix $A_{ij}$, different dynamical scenarios could arise \cite{Degroot:1974aa,golublearning,Golub:er,mossel2017opinion, liu2014control}. 
For instance, in a strongly connected social network ($A$ is irreducible), convergence to a consensus is guaranteed (i.e., there exists  $X^*$ such that $\lim_{t\rightarrow\infty}X_t^i = X^*$ for every agent $i$) if and only if the network is aperiodic \cite{Golub:er}.  
Generalizations of DeGroot's model that account for a social assumption of bounded confidence, where only agents with nearby opinions interact, can also lead to consensus formation \cite{Castellano:2009bt}. Bounded-confidence models such as Deffuant model \cite{Deffuant:2000gi} and Hegselmann-Krause model \cite{Hegselmann:th, lorenz2007continuous} also exhibit polarization (formation of two opinion clusters are formed), fragmentation of opinions (formation of many opinion clusters), and possess rich dynamical behaviours due to their non-linear update rules \cite{Castellano:2009bt}. 

In many realistic situations, experimental studies reveal that individuals tend to adopt Degroot-like learning \cite{becker2017network, chandrasekhar2015testing}. Although Bayesian learning models with priors are alternatives to describing  learning rules, the computation is more cumbersome as the agents need to keep track of both their prior distributions and update their posteriors. For simplicity, we will study opinion dynamics that arise from DeGroot-type update, with the focus on the role of noise inherent in the information source on the opinion dynamics of the entire populations.  Note that noisy opinion dynamics models are also getting increasing attention in the control and engineering literature \cite{askarzadeh2019stability}.
\paragraph{Our Contribution}
Using tools from It\^o calculus, we introduce a noisy information source into Degroot-type opinion dynamics models. We show that although noisy information source destroys consensus formation, 
the dynamics eventually converge to the equilibrium (stationary) distribution which encapsulates correlations among agents' opinions at long times.  The main finding is that such equilibrium distribution is also a non-equilibrium steady state (NESS). We study the concept of NESS numerically and analytically in the scenario of two learning agents subjected to a common noise source. To our knowledge, this is the first realization of NESS in a simple noisy social learning model. We also report an interesting phenomenon of opinion synchronization when agents are subject to a common noise, and sketch the criterion for opinion synchronization at long times. 

\section{Organization of the Paper}
In order to utilize the tools from stochastic processes to analyse noisy social learning, we first convert Degroot-type learning dynamics from discrete-time to continuous-time, and discuss the consensus formation criterion in \autoref{s: noiseless}.  An explicit method is demonstrated that makes use of the link between discrete-time and continuous-time Markov chains. 

Noisy dynamics are introduced in \autoref{s: noisy}. White noise is added and the It\^o interpretation is taken. In \autoref{s: stationary distribution}, we study the Fokker-Planck equation and how it converges to an asymptotic measure following techniques from \cite{bolley2012convergence}.
  While existing methods are known to answer this question, the method we show presents an alternative argument. 

Section \ref{s: synchronization} discusses a combination of diverse ideas and tools. If the number of agents is different from the number of Brownian drivers then there is no guarantee that the resulting transition density of the Fokker-Planck equation is regular.  The notion of a Kalman controllability matrix is introduced. Control theory make the analysis easier.  With just one source of uncertainty, the system can still have a proper probability density in $\R^n$ depending on how agents learn. Interestingly, with one common noise source, we demonstrate that some of the agents' opinions can synchronize, and the correlated opinion dynamics live in a lower dimensional linear embedding of dimension $k<n$. We attempt to establish the condition for dimensionality reduction using tools from Control theory. Finally, we study the formation of non-equilibrium steady states (broken detailed balance). For the case of only two agents, an analytical formula is known for the long-time limit distribution, and \autoref{s: ness} expands on a concrete example.

\section{Consensus Formation in Degroot-like Learning with Feedback}
\label{s: noiseless}
We first introduce motivations for the deterministic model and review results on consensus formation in this section.
 How do individuals learn when observing others in financial markets? Canonical models suggest simple rules: agents update their own beliefs after observing others. When interacting and learning, players act simultaneously or take turns. Previous models treat learning and interaction to be the same process. But more realistically, we treat these two concepts independently, as we now discuss. 

\subsection{Learning in Markets}
Market prices are the result of interaction among agents. We are interested in learning and interacting models for traders in financial settings. If prices already reflect all available information, then there is no incentive for participants to interact and learn from each other. But the markets are not static and thus individuals are always willing to acquire new information and learn from their environments. Models in mathematical finance postulate a model for a market price such as Geometric Brownian Motion. But this is just a model of the world. Market prices arise because individuals interact and learn from the world. 

Given the fragmented nature of markets, it is never the case that one price incorporates all available information.  Individuals have dispersed beliefs around a central equilibrium value, hence the existence of trading. People will have different opinions on an asset's value or financial instrument's price. Until full revelation or maturity of the instrument \footnote{Bonds and Derivatives have finite lives, that is they mature.}, live markets allow individuals to posit different prices and beliefs. Learning through communication is a reflection of reality. Today's markets are not only electronic but also very tightly linked globally. Updating of one's beliefs of an asset's price occurs not in isolation but in the presence of several competing trading platforms and market intermediaries. One trader's actions or price updates are observed by others. 

\subsection{Learning versus Interaction: Belief Update}
\emph{Interaction} in our model has a specific meaning. It means how individuals observe actions or beliefs of others and update their own opinions. This is the physical process. Interaction can be represented as an approximation to reality. Networks or matrices represent this in compact form based on the network topology. 

\emph{Learning} in our case is separate from interaction. 
The \emph{learning} part  encodes how 
 skillful each agent is  in ascertaining the quality of feedback. Our model, which is based on Ref. \cite{vaidya2018learning}, accounts for an external  feedback, where $\bar \sigma$ is the proposed truth. It influences the opinions of an entire population that can converge to $\bar \sigma$ under appropriate conditions of the interaction matrix $A$ in \eqref{eqn: degroot}.  Another interpretation of $\ms$ is that agents in one market also observe prices being quoted in another location. For example, stocks are not solely traded on NASDAQ or NYSE. So while the main market may be one of the two, trading activity can be any other trading platform and geographic location. Sometimes the best price to buy/sell is not through the major trading platform but  an alternative venue. Fragmentation of markets, even exchange based ones, is real and increasing \cite{foucault2013market}. 
\newline

Let us setup the model in more detail. Let $A$ be a row-stochastic matrix, $\bar \sigma$ be the equilibrium value times the column vector $(1,1, \dots, 1)$, and $\mathcal{E} \equiv \text{diag}(\varepsilon_1, \varepsilon_2, \dots, \varepsilon_n)$ be the diagonal matrix of learning rates that are non-negative; we first recall the discrete-time update from \cite{vaidya2018learning}; for the opinion vector $X_t \in \mathbb{R}^n$, the dynamics are
\[
X_{t+1}= \underbrace{A X_t}_{interaction} + \underbrace{\mathcal{E}( \bar \sigma - X_t)}_{learning}.
\]
The learning rates represent each agent's unique ability. Vaidya et al. \cite{vaidya2018learning} established exponentially fast convergence to $\bar \sigma$ for the above model in discrete time under appropriate criterion on $A$. What about the large-time limit in a continuous-time model? To study this limit, we need to go from discrete-time dynamics to the continuous-time.  Let us rewrite the above equation to account for the fact that time is not moving forward in unit steps but in units of $dt$ and subtract $X_t$ from both sides. The learning rates are now also with respect to time increment $dt$. Let us denote $\bar{\sigma}=\bar{\sigma}\begin{pmatrix} 
1 \\  \vdots \\ 1\end{pmatrix} $ for brevity. Then, subtracting both sides of $X_{t+dt}= A X_t + \mathcal{E}dt( \bar \sigma - X_t)$ by $X_t$ and dividing by $dt$, we obtain
\begin{equation}\label{eqn: dis2cont}
\dfrac{X_{t+dt}-X_t}{dt}= \dfrac{(A-I)}{dt}X_t  +\mathcal{E}( \bar \sigma - X_t).
\end{equation}
From continuous time Markov chain theory, we know that 
\[
\lim_{dt\rightarrow0}  \dfrac{(A-I)}{dt}=Q 
\]
is an infinitesimal generator matrix whose rows sum to zero.  Although $Q$ can be infinite for an infinite number of states, we restrict attention to the finite case.  By taking the limit $dt \rightarrow 0$ of \eqref{eqn: dis2cont}, we obtain the continuous time model 

\begin{equation}
\dot{X_t}=Q X_t + \mathcal{E}( \bar \sigma - X_t).
\end{equation}

\begin{definition}\cite{suhov2008probability} \label{def: CTMCQmatrix}
	The properties of a valid $Q$ on a finite or countable set $I$ are for all $i,j \in I$:
	
	\begin{enumerate}
		\item Non-positive diagonals $Q_{ii} \leq 0$,
		\item Non-negative off-diagonals $Q_{ij} \geq 0 \mbox{ for } i\neq j$,
		\item Zero row sum $Q_{ii}=-\sum_{j \neq i} Q_{ij}$ or $\sum_j Q_{ij}=0$.
	\end{enumerate}
\end{definition}

Definition \ref{def: CTMCQmatrix} requires a bit more explanation. In discrete time, the agent was averaging her opinion with respect to everyone else, including her own through weights matrix $A$. Once we pass to continuous time, the interpretation is slightly different.  
\begin{example}
	Consider $\mathcal{E}=0$, so there is no learning or feedback and examine agent $i$'s dynamics are given by ODE
	\[
	\dot{X}_{t}^i = \sum_{j=1}^{n} Q_{ij} X_t^j=\sum_{j\neq i}^{n} Q_{ij} (X_t^j -X_t^i).
	\]
	
	It is easy to see that by property $3$ of definition \ref{def: CTMCQmatrix}, agent $i$ is responding infinitesimally (with respect to time) her opinion in response to how far off she is from the other agents with weight $Q_{ij}$.  
\end{example}
Now if $\mathcal{E}\neq 0$, then this additional term in the dynamics reflects an adjustment to the common news source. Armed with this knowledge of $Q$, we proceed to state the fairly simple result. 

\begin{proposition}
	Suppose the continuous time dynamics are 
	\begin{equation}\label{eqn:ctstimemodel1}
	\dot{X_t}=Q X_t + \mathcal{E}( \bar \sigma - X_t),
	\end{equation}
	where $X_t\in \R^n$, the generator matrix $Q$ has full rank, $\mathcal{E}>0$, and the dot represents time derivative. Then agents in the system eventually reach the consensus
	\[
	\lim_{t\rightarrow\infty} X_t=\bar{\sigma}.
	\]
	
\end{proposition}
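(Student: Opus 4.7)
The plan is to turn the ODE into a linear system for the error $e_t := X_t - \bar\sigma$ and then show that the coefficient matrix is Hurwitz (all eigenvalues have strictly negative real part), which forces $e_t \to 0$ exponentially.

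First I would verify that $\bar\sigma$ is an equilibrium. Since $\bar\sigma$ is a scalar multiple of the all-ones vector and property 3 of \autoref{def: CTMCQmatrix} gives zero row sums, we have $Q\bar\sigma = 0$, while the feedback term vanishes at $X_t = \bar\sigma$. Subtracting $\bar\sigma$ from both sides of \eqref{eqn:ctstimemodel1} and using $Q\bar\sigma = 0$ yields the homogeneous linear ODE
\[
\dot{e}_t = (Q - \mathcal{E})\, e_t, \qquad e_t := X_t - \bar\sigma.
\]
So the whole claim reduces to showing $Q - \mathcal{E}$ is a stable matrix, independently of the initial condition $X_0$.

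Next I would establish stability of $Q - \mathcal{E}$ via Gershgorin's circle theorem. The off-diagonal entries of $Q - \mathcal{E}$ coincide with those of $Q$ (since $\mathcal{E}$ is diagonal) and are non-negative by property 2, so the $i$-th Gershgorin radius is
\[
\sum_{j \neq i} |Q_{ij}| \;=\; \sum_{j \neq i} Q_{ij} \;=\; -Q_{ii},
\]
using property 3. The $i$-th diagonal entry is $Q_{ii} - \varepsilon_i$, so the $i$-th Gershgorin disk is contained in the half-plane $\{\operatorname{Re} z \le Q_{ii} - \varepsilon_i + (-Q_{ii})\} = \{\operatorname{Re} z \le -\varepsilon_i\}$. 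Since each $\varepsilon_i > 0$ (the hypothesis $\mathcal{E} > 0$), every eigenvalue $\lambda$ of $Q - \mathcal{E}$ satisfies $\operatorname{Re}\lambda \le -\min_i \varepsilon_i < 0$. Hence $\|e^{(Q - \mathcal{E}) t}\|$ decays exponentially, so $e_t \to 0$ and $X_t \to \bar\sigma$.

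There is essentially no hard step here; this is a linear ODE with a strictly stable coefficient matrix and an equilibrium forced by the zero-row-sum property of $Q$. The only subtlety worth flagging is that the stated hypothesis ``$Q$ has full rank'' is incompatible with $Q\mathbf{1}=0$, so I would mention that this assumption is not actually used — the Gershgorin argument only needs $\mathcal{E}$ to have strictly positive diagonal. As a byproduct one obtains an explicit exponential rate $\min_i \varepsilon_i$ of convergence to consensus, which is a cleaner analogue of the discrete-time rate in \cite{vaidya2018learning}.
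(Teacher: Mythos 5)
Your proof is correct, and it reaches the conclusion by a genuinely cleaner route than the paper. The paper solves the inhomogeneous ODE explicitly by variation of parameters, $X_t = e^{Bt}X_0 + \int_0^t e^{B(t-s)}\mathcal{E}\bar\sigma\,ds$ with $B = Q-\mathcal{E}$, then computes $\int_0^t e^{By}dy = B^{-1}(e^{Bt}-I) \to -B^{-1}$ and verifies separately that $-B^{-1}\mathcal{E}\bar\sigma = \bar\sigma$ using the zero-row-sum property. You instead shift to the error coordinate $e_t = X_t - \bar\sigma$, use $Q\bar\sigma = 0$ up front to make the system homogeneous, and then only need the Hurwitz property of $Q-\mathcal{E}$. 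Both arguments rest on exactly the same two ingredients (zero row sums of $Q$ and Gershgorin), but your version avoids inverting $B$ and computing the limit of the particular-solution integral, and it yields the explicit decay rate $\min_i \varepsilon_i$ for free. Your observation that the stated hypothesis ``$Q$ has full rank'' is vacuous is also correct and worth flagging: since $Q\mathbf{1}=0$ by property 3 of the generator definition, $\mathbf{1}$ lies in the kernel of $Q$, so $Q$ can never have full rank; what the paper actually uses (and what Gershgorin delivers) is invertibility of $B = Q - \mathcal{E}$, not of $Q$. The one point to keep in mind is that the paper's longer computation does earn something your shortcut does not need but a reader might want, namely the closed-form limit $\lim_{t\to\infty}X_t = -B^{-1}\mathcal{E}\bar\sigma$, which is reused in spirit when the noise is switched on later in the paper; your argument identifies the limit as $\bar\sigma$ directly without exhibiting that formula.
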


\begin{proof} By rearranging \eqref{eqn:ctstimemodel1} as $\dot{X_t}=(Q - \mathcal{E} )X_t + \mathcal{E}\bar \sigma$ and introducing $B\triangleq Q- \mathcal{E}$, the solution $X_t$ can be obtained by directly solving the ODEs using integrating factor as

	\[
	X_t=X_0e^{Bt}+\int_{0}^{t}e^{B(t-s)}\mathcal{E}\bar \sigma ds.
	\]
	In order for the time evolution to remain finite, we need $\displaystyle\lim_{t\rightarrow\infty}X_0e^{Bt}=0$. In fact, for strictly positive learning rates $\mathcal{E} > 0$, all eigenvalues of $B$ are ensured to have a negative real part, and thus $\displaystyle\lim_{t\rightarrow\infty}X_0e^{Bt}=0$. This follows directly from Gershgorin's circle theorem on eigenvalues. No connectedness assumption is made on $A$ or its generator matrix $Q$; the underlying embedded graph could be entirely disconnected. Agents can be totally disconnected in a graphical sense or can learn and interact from each other and be strongly connected. Provided agents are learning, we will have $\displaystyle\lim_{t\rightarrow\infty}X_0e^{Bt}=0$. 
	
	Now consider the second term in the solution. Substituting $y=t-s$ and $dy=-ds$ and making a change of variables yields
	
	\begin{align*}
	\int_{0}^{t}e^{B(t-s)}\mathcal{E}\bar \sigma ds =\int_{0}^{t}e^{By} dy \, \mathcal{E}\bar \sigma.
	\end{align*}
	As $\mathcal{E}\bar \sigma$ is a constant it comes out of the integration.  And as we assumed that $B$ has full rank and so invertible, therefore
	\begin{equation*}
	\int_{0}^{t}e^{By} dy=B^{-1}(e^{Bt}-I).
	\end{equation*}
	By taking the limits $t \rightarrow \infty$, the right hand side converges to $-B^{-1}$. 
	The negative (real part of) eigenvalues of $B$ ensure that $e^{Bt}$ converges to zero as $t \to \infty$. Finally, we get that
	\[
	\lim_{t\rightarrow\infty} X_t=-B^{-1}\mathcal{E}\bar \sigma.
	\]
	In fact, the above limit of $X_t$ is $\bar{\sigma}=\bar{\sigma}\begin{pmatrix} 
	1 \\  \vdots \\ 1\end{pmatrix}$, where we have slightly abused the notation of $\bar{\sigma}$ to stand for the constant vector or the scalar.  Why is the limit $\bar{\sigma}$? Consider, expanding out $B$
	
	\begin{align*}
	-B^{-1}\mathcal{E}\bar \sigma&=-(Q-\mathcal{E})^{-1}\mathcal{E} \bar{\sigma}.
	\end{align*}
	Recall that the rows of $Q$ sum to zero, thus 
	\begin{align*}
	(Q-\mathcal{E})\bar{\sigma}\begin{pmatrix} 
	1 \\  \vdots \\ 1\end{pmatrix}&=-\mathcal{E} \bar{\sigma}\begin{pmatrix} 
	1 \\  \vdots \\ 1\end{pmatrix},
	\end{align*}
	which implies
	\begin{equation*}
	\bar{\sigma}\begin{pmatrix} 
	1 \\  \vdots \\ 1\end{pmatrix} =- (Q-\mathcal{E})^{-1} \mathcal{E} \bar{\sigma}\begin{pmatrix} 
	1 \\  \vdots \\ 1\end{pmatrix}.	
	\end{equation*}
	Or more compactly,
	\begin{align*}
	\lim_{t\rightarrow\infty}X_t=-B^{-1} \mathcal{E} \bar{\sigma}=\bar{\sigma}.
	\end{align*}
	Therefore, without noise, our continuous time dynamics converge to the consensus value.
\end{proof}

\begin{corollary}
	Suppose the original interaction matrix $A$ is the identity matrix, there is no interaction between agents and $\mathcal{E}>0$. Then there is still consensus. 
\end{corollary}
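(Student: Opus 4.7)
The plan is to reduce the corollary to essentially the same calculation as in the proposition, observing that the hypothesis ``$Q$ has full rank'' was invoked in the proof only to guarantee that $B = Q - \mathcal{E}$ is invertible with eigenvalues of negative real part, and this conclusion still holds when $Q = 0$.

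First I would translate the hypothesis $A = I$ into a statement about $Q$. Since the continuous-time generator is obtained as $Q = \lim_{dt \to 0} (A - I)/dt$, having $A = I$ at the discrete level immediately gives $Q = 0$. Substituting into \eqref{eqn:ctstimemodel1} yields the fully decoupled ODE system
\[
\dot X_t = \mathcal{E}(\bar\sigma - X_t),
\]
i.e.\ $\dot X_t^i = \varepsilon_i(\bar\sigma - X_t^i)$ for each $i$, so the argument can in fact be done componentwise.

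Next I would redo the integrating-factor step of the proposition with $B = -\mathcal{E}$. Because $\mathcal{E}$ is a diagonal matrix with strictly positive entries $\varepsilon_i > 0$, the matrix $B = -\mathcal{E}$ is invertible and every eigenvalue is negative. Hence $e^{Bt} \to 0$ as $t \to \infty$, and the explicit solution $X_t = e^{Bt} X_0 + \int_0^t e^{B(t-s)} \mathcal{E} \bar\sigma\, ds$ converges to $-B^{-1}\mathcal{E}\bar\sigma = \mathcal{E}^{-1}\mathcal{E}\bar\sigma = \bar\sigma$. So the consensus value is recovered even though the underlying interaction graph is empty.

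The only subtle point, and what I would flag explicitly, is that the proposition as stated assumes $Q$ is full rank, which fails here since $Q = 0$. So one cannot quote the proposition as a black box; instead one has to observe that the proof really only uses invertibility and stability of $B = Q - \mathcal{E}$, both of which are supplied directly by the assumption $\mathcal{E} > 0$. No connectivity or interaction structure is needed — the learning term alone drives each agent to $\bar\sigma$ independently.
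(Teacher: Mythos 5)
Your proposal is correct and follows essentially the same route as the paper, which simply declares the corollary to be the special case $Q=0$ reducing to decoupled one-dimensional dynamics. Your explicit flag that the proposition's hypothesis ``$Q$ has full rank'' fails when $Q=0$ --- so that one must rerun the argument using only the invertibility and stability of $B=-\mathcal{E}$, both guaranteed by $\mathcal{E}>0$ --- is a fair and worthwhile point that the paper's one-line justification glosses over.
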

The proof of the corollary is just a special case of the above where $Q=0$ as $A=I$. In this case, the setting reduces to the one-dimensional version and we still have convergence to consensus.  A remark should be made here. In a later section, we may assume, for simplicity, $Q$ to be symmetric to ensure convergence to equilibrium. However, in general, $Q$ can be also non-symmetric and a stationary measure can still exist as long as the real parts of all the eigenvalues of $Q$ are negative.

\begin{example}
	As an illustration consider the matrices
	
	\[
	Q=\left(
	\begin{array}{ccc}
	-3 & 3 & 0 \\                
	2 & -5 & 3 \\
	0 & 2 & -2 \\
	\end{array}
	\right) ,\quad \mathcal{E}=\left(
	\begin{array}{ccc}
	1 & 0 & 0 \\
	0& 1 & 0\\
	0 & 0 & 1 \\
	\end{array}
	\right), \quad \ms=\begin{pmatrix}
	0.25\\ 0.25\\ 0.25
	\end{pmatrix}.
	\]
	\begin{figure}
		\centering
		\includegraphics[scale=.45]{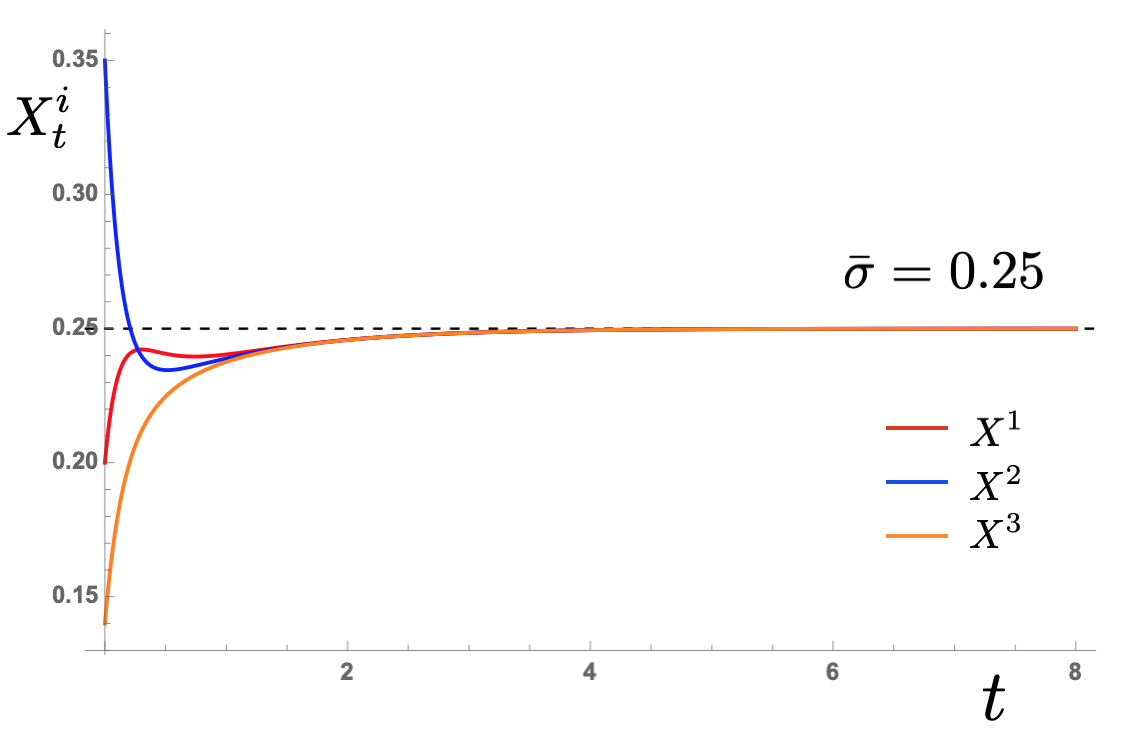}
		\caption{Illustration of consensus formation in continuous-time deterministic opinion dynamics. Three agents start from different initial beliefs but converge quickly to agreement at $\bar\sigma = 0.25$. Agents $1,2 \text{ and }3$ have initial beliefs $0.2, 0.35 \text{ and }0.14$ respectively.}  \label{fig: threeagentsnonoise}
	\end{figure}
Figure \ref{fig: threeagentsnonoise} illustrates the agents quickly reach the consensus in continuous time.
The underlying discrete time Markov chain for the $Q$ matrix is
\[
A=\left(
\begin{array}{ccc}
0 & 1 & 0 \\
\frac{2}{5} & 0 & \frac{3}{5} \\
0 & 1 & 0 \\
\end{array}
\right),
\]
which is irreducible. 
\end{example}
\section{Noise and SDE for Opinion Dynamics with Noisy Feedback}
\label{s: noisy}
As opposed to a static proposed truth $\bar\sigma$ introduced in the previous section studied in discrete-time dynamics in \cite{vaidya2018learning}, in this section we will study the continuous-time dynamics arising from situations in which the proposed truth is unknown and can be uncertain. We will focus on the simple case where the noise in the proposed truth is Gaussian, yet show that interesting non-trivial phenomena can arise. Consider a Gaussian white noise $\Gamma_t$ (random vector) uncorrelated in time with mean zero and independent components, $\, \forall t \geq 0$ 
\begin{align*}
\E[\Gamma_t] &=0, \\
\text{Cov}[\Gamma_t]&=I.
\end{align*}
Adding this noise to $\bar \sigma$, the stochastic dynamics $\dot{X_t}=Q X_t + \mathcal{E}( \bar \sigma + \Gamma_t - X_t)$ can be rewritten as
\begin{equation*}
dX_t=(Q-\mathcal{E}) X_t dt + \mathcal{E}(\bar\sigma + \Gamma_t)dt.	
\end{equation*}

We will focus on how noise affect consensus formation; hence, we will restrict the analysis in the parameter space such that its noiseless dynamics converges to the consensus $\bar\sigma$. Without loss of generality, we can also set $\bar\sigma = 0$. In this case, the stochastic differential equation (SDE) becomes 
\begin{equation}\label{eqn: SDE_WhiteNoise}
dX_t=(Q-\mathcal{E} )X_t dt+ \mathcal{E}dW_t,
\end{equation}
where $W_t=\Gamma_t dt$ is an $n$-dimensional Brownian motion that arises due to a Gaussian white noise in the proposed truth. Note that, mathematically, the formal time derivative does not exist, but we interpret it as such in a distributional sense \cite{evans2012introduction}. Also, we will interpret the noise as an It\^o rather than a Stratonovich representation \cite{yuan2012beyond, mannella2012ito}, for  simplicity in the analysis and for a non-anticipative physical interpretation of learning framework. 

We now briefly discuss the comparison between our stochastic setup and relevant existing literature. Our derivation of the SDE differs from a complementary approach in \cite{galayda2012stochastic}. We are emphasizing the local interactions of opinion dynamics, working from first principles. While the mathematics so far has been simple, we obtained a multidimensional Ornstein-Uhlenbeck process with underlying social interactions of traders. Early work by Follmer et al. \cite{follmer1993microeconomic, horst2005financial} assumed agents fell into three categories and interaction was not explicitly modelled. Rather,  in their model, unlike ours, the process derived was a univariate Ornstein-Uhlenbeck from a discrete series of temporal equilibria as time steps converged to zero. Nevertheless, using microeconomics  showed that even with a limiting \textit{Ornstein-Uhlenbeck process in a random environment}, significant technicalities remain. The resulting one dimensional Ornstein-Uhlenbeck process has random drift and diffusion coefficients. There has been ongoing work in this area \cite{henkel2017agent, henkel2017quantum, pakkanen2010microfoundations}. While agents are heterogeneous, social interaction is not addressed specifically, but analysis focuses on the setting where the number of traders tends to infinity.  For us, we will assume constant drift and diffusion coefficients. If the matrices $Q$ and $\mathcal{E}$ were random or time varying the analysis introduces an undue amount of technicalities \cite{huang2016markov}.

\subsection{Constant drift and diffusion}
The first case to study is when $Q$ and $\mathcal{E}$ are constant. The primary interest is when the noise term is different to all players. The above process can be written (letting $B=Q-\mathcal{E}$ and assuming, as mentioned earlier, the real part of all the eigenvalues are negative to ensure convergence to the equilibrium $\bar \sigma = 0$ in the deterministic case) as
\begin{equation}\label{eqn:model1}
dX_t=BX_t dt+\mathcal{E}dW_t,
\end{equation}
where $\mathcal{E}$ is an $n\times n$ diagonal matrix of learning rates and $dW_t$ is an n-dimensional Brownian motion. Actually, this is a linear SDE because the number of independent Brownian drivers is the same as the number of agents $n$. We could take an $m$-dimensional Brownian motion, where $m<n$. In this case, the regularity of the asymptotic or transition density of $X_t$ is not guaranteed. So we concentrate on the simplest case first to analyse the opinion dynamics.  Note that Ornstein-Uhlenbeck processes are popular models to study a variety of stochastic phenomena in financial markets  and biology \cite{da2013generalized, ramsza2010fictitious, lima2019breaks, kessler2017stochastic}.

For a solution to exist, assume that the initial condition $X_0$ is independent of the filtration generated by the process $X_t$ or $W_t$. The initial condition can be random or given. Let  $\mathcal{F}_0=\sigma{(X_0)}$ be the sigma algebra of $X_0$ and assume that the Brownian motion admissible filtration$\{\mathcal{F}_t\}_{t\geq0}$ is independent of $\mathcal{F}_0$. If $X_0$ is random we assume that $\E[|X_0|^2]<\infty$. An alternative way to say this is that the solution depends on the initial condition $X_0$.  As $B$ and $\mathcal{E}$ are constant, the Lipschitz condition and linear growth bound are trivially satisfied.

The solution of \eqref{eqn:model1} is
\begin{equation}
X_t=X_0 e^{Bt} +\int_{0}^{t}e^{B(t-s)}\mathcal{E}dW_s,
\end{equation}
and can be verified using simple differentiation, where $X_0$ is some fixed initial condition. 

The mean is $\E (X_t) = e^{Bt} X_0$ and the covariance matrix is \newline $\E \left[ \left( X_t -\E X_t\right)\left( X_t -\E X_t\right)^\top  \right]$. The rules of It\^o calculus apply and  
\[
\E [\int_{0}^{t}e^{B(t-s)}\mathcal{E}dW_s]=0,
\]
because it is an Ito integral with constant coefficients. It\^o integrals are Gaussian and this result can be seen in \cite{evans2012introduction}.

Note that that if $E[X_0]=0$, then $\E[X_t]=0$ $\forall t > 0$ and the covariance matrix can be simplified

\begin{align*}
\text{Cov}(X_t)&=\E [X_t\,X_t^\top]\\
&=\E [(\int_{0}^{t}e^{B(t-s)}\mathcal{E}dW_s)(\int_{0}^{t}e^{B(t-s)}\mathcal{E}dW_s)^\top]\\
&\mbox{(by It\^o isometry)}\\
&=\int_{0}^{t}e^{B(t-s)}\mathcal{E}\mathcal{E}^\top e^{B^\top (t-s)}ds.
\end{align*}
In any case, $\lim_{t\rightarrow\infty}\E (X_t)=0$. The influence of any fixed non-zero initial condition diminishes because $B$ is asymptotically stable: its eigenvalues are negative. One thing to note is that we are interested in the covariance matrix and not the autocorrelation 
$
\E \left[ \left( X_t -\E X_t\right)\left( X_s -\E X_s\right)^\top  \right]
$ for different times $s,t$. 

The covariance matrix of the stochastic process $X_t$ is

\begin{equation*}
C_t \triangleq \int_{0}^{t} e^{B(t-s)}\mathcal{E}\mathcal{E}^\top e^{B^\top (t-s)}ds= \int_{0}^{t} e^{By}\mathcal{E}\mathcal{E}^\top e^{B^\top y}dy.
\end{equation*}
The above equality follows from a simple change of variables. The corresponding asymptotic (stationary) covariance matrix is then
\begin{equation}\label{eqn:limitcov1}
C^*\coloneqq \lim_{t\to\infty} C_t =\int_{0}^{\infty} e^{By}\mathcal{E}\mathcal{E}^\top e^{B^\top y}dy.
\end{equation}


Note that \eqref{eqn:model1} can also be described by superscripts denoting components for each row
\[
dX_t^i=\sum_{j=1}^{n}B_{ij}X_t^j dt + \sum_{j=1}^{n}\mathcal{E}_{ij}dW_t^j \quad ; i=1,\cdots,n. 
\]

Following \cite{pavliotis2011stochastic}, we now proceed to find the condition satisfied by the stationary covariance matrix $C^*$. Let $\Sigma=\mathcal{E}\mathcal{E}^\top$, then the Fokker-Planck equation governing the transition density $\emph{p}(X,t)$ is

\[
\displaystyle {\frac {\partial p(X ,t)}{\partial t}}=-\sum _{i=1}^{n}{\frac {\partial }{\partial x_{i}}}\left[(BX)_{ii} \, p(X ,t)\right]+\frac{1}{2}\sum _{i=1}^{n}\sum _{j=1}^{n}{\frac {\partial ^{2}}{\partial x_{i}\,\partial x_{j}}}\left[\Sigma_{ij} \,p(X ,t)\right],
\] or in a more compact form, provided $\Sigma$ is constant,

\begin{equation}\label{eqn:FokkerPlanck1}
\frac{\partial p}{\partial t}=-\nabla \cdot (BXp) +\frac{1}{2}\nabla \cdot (\Sigma \nabla p).
\end{equation}
Without loss of generality, suppose the initial condition is $X_0=0$, then the solution to \eqref{eqn:model1} is a \emph{Gaussian process} with mean $0$ and covariance  
\begin{equation}\label{eqn:cov1}
C_t\triangleq \int_{0}^{t}e^{B(t-s)}\mathcal{E}\mathcal{E}^\top e^{B^\top (t-s)}ds.
\end{equation}
Using Leibniz rule and differentiating \eqref{eqn:cov1} with respect to $t$ give 
\[
\frac{d C_t}{dt}=BC_t +C_tB^\top + \mathcal{E}\mathcal{E}^\top.
\]
Therefore, the asymptotic stationary covariance satisfies (setting $\frac{dC*}{dt}=0$) 
\begin{equation}\label{eqn: lyapunoveqn}
BC^* +C^*B^\top + \mathcal{E}\mathcal{E}^\top=0,
\end{equation}
which is known as a Lyapunov equation in Control theory.

Note that, because our model is a Gaussian process, the transition probability density for \eqref{eqn:FokkerPlanck1} is known at all times and is given by \cite{Gardiner:1985qr}
\begin{equation}\label{eqn:transitionpdf1}
p(X,t|X_0,0)=\frac{1}{(2\pi)^{n/2}\sqrt{det(C_t)}}\exp\left[ -\frac{1}{2}(X-e^{Bt}X_0)^T C_t^{-1} (X-e^{Bt}X_0)\right], 
\end{equation}
where the initial condition is  
\[
p(X,t|X_0,0)=\delta^n(X-X_0)\delta(t).
\]
We now study the long-time stochastic opinion dynamics concerning the convergence and the property of the asymptotic (stationary) distribution.
\section{Convergence to the Stationary Distribution}
\label{s: stationary distribution}
Recall that \eqref{eqn:model1} is a Gaussian process since the integrand in the It\^o integral is deterministic and bounded. At each point in time, a solution to the Fokker-Planck equation is also Gaussian, given by \eqref{eqn:transitionpdf1}. Does the long-time limit of $p_t$ reach a steady state? For $B$ having a negative real part of all the eigenvalues, many classic results show that $p_\infty$ exists since the dynamics is a multi-variate Ornstein-Uhlenbeck process. Here, however, we present another approach to investigate convergence to stationary distribution based on Wasserstein distances techniques in Optimal Transport theory, which has received increasing attention. Following \cite{lasota2013chaos}, we begin with the simplest case.

\subsection{Symmetric interaction}
Suppose that the discrete time interaction matrix $A$ is symmetric, then the infinitesimal generator matrix $Q$ is also symmetric. Thus, $B$ in \eqref{eqn:model1} is also symmetric. In this case, we can construct the Lyapunov function 
\[
V(X)=-\frac{1}{2}X^\top BX.
\]
As $B$ is symmetric, the gradient can easily be expressed as
\[
\nabla V=-BX,
\]
where the $\nabla$ is with respect to $X\in \R^n$. Therefore, the SDE reads
\[
dX_t=-\nabla V(X_t) dt+\mathcal{E}dW_t.
\]

In this case, we know that the stationary distribution will be reached eventually and is simply given by the Boltzmann distribution with energy $V$, and is discussed in more details in \ref{appendix b}. Now we will argue using Optimal Transport approach that the dynamics in fact converges to the Boltzmann distribution.
\subsection{Gradient flows}
For simplicity, we will assume $\mathcal{E}=\sqrt{2}\,\mathbf{I}$. Agents hold identical positive learning rates. The Fokker-Planck equation for the time evolution of the density of belief profiles of agents in $\R^n$ becomes

\begin{equation}\label{eqn:fokkerplanck3}
\frac{\partial p}{\partial t}=\nabla \cdot (\nabla V p + \nabla p ).
\end{equation}
Suppose $\mu_t$ is a solution  at time $t$ and $\nu$ is the stationary solution to \eqref{eqn:fokkerplanck3}, given some initial datum or starting point $X_0$. The stationary solution can be summarized, when the drift is a gradient of a convex funtion $V$, as

\[
\int e^{-V}=1.
\] 
and its stationary probability density is given by 
\[
d\nu(x)=e^{-V(x)}dx 
\] 

Then one can use techniques from functional analysis and use the Poincar\'e inequality to show convergence in the $L^2(\R^n;e^{-V})$ norm between probability measures weighed by the stationary measure $e^{-V}$.  A good reference for this is \cite{bolley2012convergence}. Another approach is to use Lyapunov function and examine stochastic stability \cite{khasminskii2011stochastic}. Both cases assume reversible diffusions. To generalize this and allow for non-gradient drift form a newer approach is required. This is a vast subject and we cannot do it justice here, especially the interplay between functional-analytic and Lyapunov methods. However, we outline a third approach, using Wasserstein distances, neatly summarized in \cite{bolley2012convergence}. We show in \ref{appendix c}, convergence to equilibrium using the Wasserstein approach. Recent work in this direction seems to show promise for the case of non-symmetric $B$ {\textemdash} when it cannot be expressed as the gradient of a convex functional.

\section{Synchronization of Agents: Criterion for Dimensionality Reduction of the Stationary Distribution}
\label{s: synchronization}

Till now our discussion has focused on linear SDEs with the same number of agents as the number of Brownian motions. Indeed, 
\eqref{eqn:model1} can be written in a more general setting

\begin{align} \label{eqn: SingleBrownian}
dX_t&=BX_t dt+\mathcal{E}dW_t,\\
X_0&=\xi,
\end{align}
where $W$ is an $r$-dimensional Brownian motion independent of the initial vector $\xi$, $B$ is an $n\times n$ matrix as before, and $\mathcal{E}$ is an $n\times r$ matrix. In general, systems could have $r<n$. 

Suppose the system has only one Brownian motion $(r=1)$ affecting all agents, then $\mathcal{E}$ becomes a column vector and $dW_t$ is \emph{one} dimensional. This reflects the condition where all the agent are subject to a common noise source in the proposed truth. The diffusion part of the SDE simplifies and can be written in two ways

\[
\mathcal{E}dW_t =\begin{pmatrix}
\varepsilon_1 \\ 
\vdots \\ 
\varepsilon_n
\end{pmatrix} dW_t^1 =\begin{pmatrix}
\varepsilon_1  &0& &0\\ 
\vdots &   & \ddots& \\ 
\varepsilon_n & 0 & &0
\end{pmatrix}\begin{pmatrix}
dW_t^1 \\ 
\vdots \\ 
dW_t^n
\end{pmatrix} .
\]

We will use the last representation, which means all agents are perturbed by only one Brownian motion (the noisy proposed truth $\bar \sigma$ is shared by every agent). For $n$ agents, the natural question to ask is if the transition density is absolutely continuous with respect to the Lebesgue measure? Since we are dealing with Gaussian measures at all times we can restrict our attention to covariance matrices.  With a single source of noise is the resulting covariance matrix singular or non-singular? We know from  \eqref{eqn:cov1} that the covariance matrix is
\[
C_t\triangleq \int_{0}^{t}e^{B(s)}\mathcal{E}\mathcal{E}^\top e^{B^\top (s)}ds.
\]
But now notice that here
\[
\mathcal{E}\mathcal{E}^\top=\begin{pmatrix}
\varepsilon_1 \\ 
\vdots \\ 
\varepsilon_n
\end{pmatrix} \begin{pmatrix}
\varepsilon_1,
\cdots,
\varepsilon_n
\end{pmatrix}=\begin{pmatrix}
\varepsilon_1  &0& &0\\ 
\varepsilon_2 &0   & &0 \\ 
\vdots &   & \ddots& \\ 
\varepsilon_n & 0 & &0
\end{pmatrix}\begin{pmatrix}
\varepsilon_1  &\varepsilon_2&\cdots &\varepsilon_n\\ 
0 &  0& &0 \\ 
\vdots &   & \ddots& \\ 
0 & 0 & &0
\end{pmatrix}.
\]

This is a rank one matrix. So the inner part in our definition of the covariance matrix is of rank one. This presents considerable difficulties. The process $X_t$ may be degenerate Gaussian.  However, in some cases even with $n$ agents it is not always the case that there is a manifold collapse {\textendash} the agents opinions lie in a lower dimensional space. It could be  that the covariance matrix is still nonsingular. To study this aspect we need to introduce the concept of controllability. In control theory, the typical issue is to determine an input which steers a dynamical system to a certain point at terminal time.

\begin{definition} 
	The pair of locally bounded functions  $(B,\mathcal{E})$ is controllable on $[0,T]$ if for every $x,y \in \R^n$, there exists a measurable, bounded function $v:[0,T] \to \R^n$ such that the time derivative
	\[
	\dot{Y} =  B(t)Y(t) + \mathcal{E}(t)v(t), \, 0 \leq t \leq T
	\]
	satisfies $Y(T)=y$: there exists a control function $v(\cdot)$ which steers the linear system from $Y(0)=x$ to $Y(T)=y$.  (See figure \ref{fig: controllability}).
\end{definition}

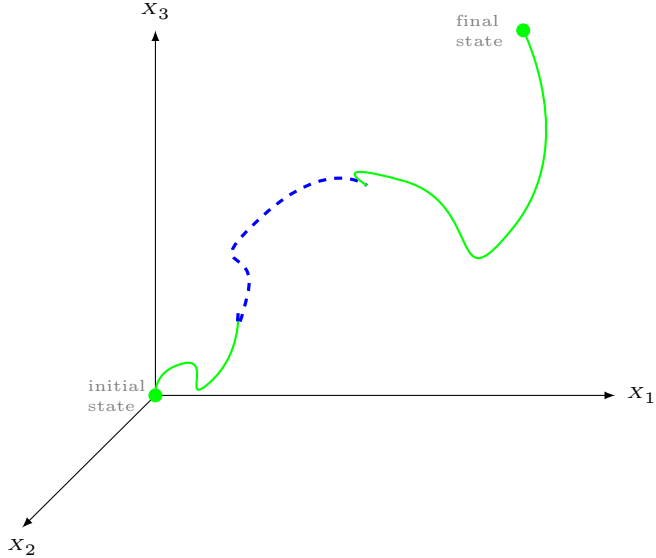
\begin{figure}
	\centering
	\begin{tikzpicture}[transform shape, scale = 1.21, thin,>=latex,font=\tiny,postaction={decorate,
		decoration={markings,mark=at position 0.5 with {\arrow{>}}}
	}]
	\draw[->] (0,0)--(0,4) node[above] {$X_3$};
	\draw[->] (0,0)--(-1.45,-1.45)node[below] {$X_2$};
	\draw[->] (0,0)--(5,0)	node[right] {$X_1$};
	\filldraw [green] (0,0) circle (2pt) node[left,gray,text width=.61cm]{initial state} ;
	\filldraw [green] (4,4) circle (2pt)node[left,gray,text width=.61cm]{final state} ;
	\draw [green, thick] plot [smooth, tension=2] coordinates { (0,0) (0.3,0.35) (0.6,0.15) (0.9,0.9)};
	\draw [blue,dashed, very thick] plot [smooth, tension=2] coordinates { (0.9,0.9) (1.0,1.095) (1.16,1.96) (2.3,2.3)};
	\draw [green, thick] plot [smooth, tension=2] coordinates { (2.3,2.3) (2.7,2.35) (3.9,1.85) (4,4)};
	\end{tikzpicture}
	\caption{An illustration of controllability in three dimensions from an initial state to a desired state.}
	\label{fig: controllability}
\end{figure}

The non-degeneracy of the distribution of $X_t$, which is Gaussian, is precisely when the covariance matrix has full rank \cite{liu2011controllability}. This is explained in \cite{karatzas1998brownian} from which we state the definition and result. 

\begin{proposition}
	The pair of constant matrices $(B,\mathcal{E})$ is controllable on any interval $(0,T)$ if and only if the $n \times n$ controllability matrix
	\[
	[\mathcal{E}, B\mathcal{E}, B^2\mathcal{E},\cdots, B^{n-1}\mathcal{E} ]
	\] has rank $n$.
\end{proposition}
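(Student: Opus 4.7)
The plan is to rephrase controllability as a surjectivity statement about a linear integral operator and then use the Cayley-Hamilton theorem to reduce an a priori infinite-dimensional spanning condition to the rank of the finite Krylov block $[\mathcal{E},B\mathcal{E},\ldots,B^{n-1}\mathcal{E}]$. First I would apply variation of parameters to the linear ODE: any admissible trajectory from $Y(0)=x$ satisfies
\[
Y(T) \;=\; e^{BT} x + \int_{0}^{T} e^{B(T-s)} \mathcal{E}\, v(s)\, ds.
\]
Thus $(B,\mathcal{E})$ is controllable iff for every $z\in\R^{n}$ (the target $y-e^{BT}x$) there is a bounded measurable $v$ with $Lv:=\int_{0}^{T} e^{B(T-s)}\mathcal{E}\,v(s)\,ds = z$. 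So controllability becomes surjectivity of the linear map $L:L^{2}([0,T];\R^{r})\to\R^{n}$, which by elementary Hilbert-space duality is equivalent to injectivity of its adjoint
\[
L^{*}:\eta\;\longmapsto\;\bigl(s\mapsto \mathcal{E}^{\top} e^{B^{\top}(T-s)}\eta\bigr).
\]
In other words, the question reduces to: there is no nonzero $\eta\in\R^{n}$ with $\mathcal{E}^{\top} e^{B^{\top}(T-s)}\eta = 0$ for a.e.\ $s\in[0,T]$, if and only if the controllability matrix has rank $n$.

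Next I would exploit analyticity and Cayley-Hamilton. Because $s\mapsto \mathcal{E}^{\top} e^{B^{\top}(T-s)}\eta$ is real-analytic, it vanishes on $[0,T]$ iff all its Taylor coefficients at, say, $s=T$ vanish, iff $\eta^{\top}B^{k}\mathcal{E} = 0$ for every $k\ge 0$. Cayley-Hamilton says $B$ satisfies its characteristic polynomial, so every $B^{k}$ with $k\ge n$ is a polynomial of degree $<n$ in $B$; equivalently there are analytic scalar functions $\alpha_{0}(t),\ldots,\alpha_{n-1}(t)$ with $e^{Bt} = \sum_{k=0}^{n-1}\alpha_{k}(t)B^{k}$. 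Hence the infinite family of conditions $\eta^{\top}B^{k}\mathcal{E}=0$ collapses to the finite system $\eta^{\top}B^{k}\mathcal{E}=0$ for $k=0,1,\ldots,n-1$, which says exactly that $\eta$ lies in the left null space of $[\mathcal{E},B\mathcal{E},\ldots,B^{n-1}\mathcal{E}]$.

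Combining the two steps gives both directions. $L$ fails to be surjective iff a nonzero left annihilator of the controllability matrix exists, iff that matrix has rank strictly less than $n$. Conversely, when its rank is $n$, $L^{*}$ is injective, and the controllability Gramian $W(T)=LL^{*}=\int_{0}^{T}e^{B(T-s)}\mathcal{E}\mathcal{E}^{\top}e^{B^{\top}(T-s)}\,ds$ is positive definite; an explicit control steering $x$ to $y$ on $[0,T]$ is then
\[
v(s) \;=\; \mathcal{E}^{\top} e^{B^{\top}(T-s)} W(T)^{-1}\bigl(y - e^{BT}x\bigr),
\]
as can be checked by substitution. The main obstacle, and the one conceptual step in the argument, is the Cayley-Hamilton-based collapse from the infinite family $\{B^{k}\mathcal{E}\}_{k\ge 0}$ to the finite Krylov block; everything else is variation of parameters and the closed-range / adjoint formalism. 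A minor technical point to handle carefully is that the analyticity argument requires the vanishing on a set with a limit point, which $[0,T]$ trivially provides, so no further regularity assumption on $v$ is needed beyond bounded measurability.
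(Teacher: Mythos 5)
Your proof is correct. Note, however, that the paper itself offers no proof of this proposition: it is stated as a quoted result, with the authors deferring to \cite{karatzas1998brownian} for the explanation and to \cite{priola2008densities} for a more general version covering Ornstein--Uhlenbeck processes with jumps. So there is no in-paper argument to compare yours against; what you have written is the standard self-contained derivation that those references contain. Your chain of reductions is sound: variation of parameters turns controllability into surjectivity of the reachability operator $L$; since the range of $L$ is a subspace of $\R^n$ it is automatically closed, so surjectivity is equivalent to injectivity of $L^{*}$; real-analyticity of $s\mapsto \mathcal{E}^{\top}e^{B^{\top}(T-s)}\eta$ converts vanishing a.e.\ into vanishing of all derivatives at $s=T$, i.e.\ $\eta^{\top}B^{k}\mathcal{E}=0$ for all $k\ge 0$; and Cayley--Hamilton truncates this to $k\le n-1$, identifying $\ker L^{*}$ with the left null space of the Krylov block. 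The explicit Gramian-based control $v(s)=\mathcal{E}^{\top}e^{B^{\top}(T-s)}W(T)^{-1}(y-e^{BT}x)$ is continuous, hence admissible under the paper's ``bounded measurable'' requirement, which closes the sufficiency direction cleanly. One cosmetic remark: the paper's definition takes $v:[0,T]\to\R^{n}$, while with $\mathcal{E}$ of size $n\times r$ the control should be $\R^{r}$-valued as you have it; that is a typo in the paper, not a gap in your argument.
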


Actually, this result is proved in more generality by \cite{priola2008densities}, where the case of Ornstein-Uhlenbeck processes with jumps is also considered and leads us to the most important remark.

\begin{remark}
	Controllability is  equivalent to the covariance matrix having full rank. Thus, the probability law of $X_t$  is absolutely continuous with respect to the Lebesgue measure in $\R^n$.  If the dynamics are not controllable then the system collapses onto a subspace of $\R^n$. Consequently, the asymptotic covariance matrix will not be strictly positive definite and $\det(C^*)=0$.
\end{remark}

In linear systems theory, the covariance matrix is equivalent to a controllability grammian. If the grammian is positive definite, then it is nonsingular. To check whether our model has a probability density we only need to calculate the controllability matrix rank \cite{elliott1971consequence}. Consider the following examples where there is a single Brownian driver. 
\begin{figure}
	\centering
	\includegraphics[width=1 \linewidth]{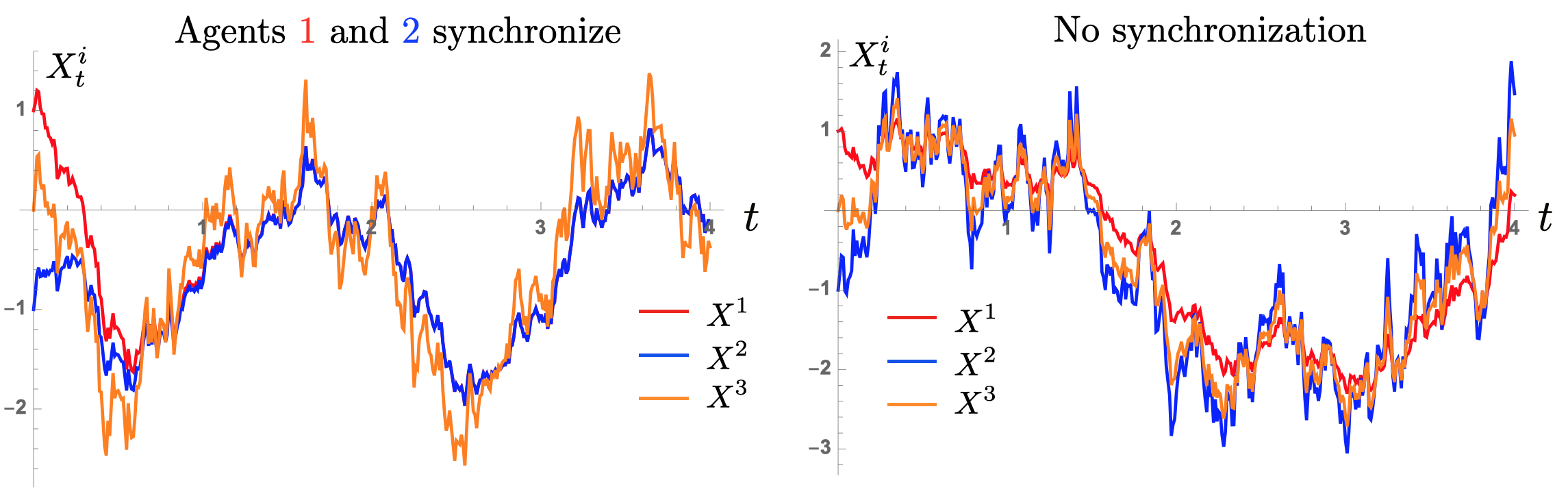}  
	\caption{Illustration of synchronization (dimensionality reduction) of noisy opinion dynamics of 3 agents. Initially, the agents' opinions are different at $(1,-1,0)$, respectively. (Left) However, at long times, noisy opinion dynamics from Example \ref{ex:notfullrank} in which the controllability matrix has rank 2 shows synchronization of agents 1 and 2. Stationary covariance matrix here has rank 2, and thus is degenerate, reflecting that correlated opinion dynamics at long times are 2 dimensional.  (Right) Noisy opinion dynamics from example \ref{ex:fullrank} in which the controllability matrix has full rank (3) does not show synchronization. The agents' opinions remain different. The covariance matrix here has full rank, and correlated opinion dynamics at long times lives in 3 dimensions.}
	\label{fig: synchronization}
\end{figure}%
\begin{example}\label{ex:notfullrank}
	As there is only one source of noise, the learning rates $\mathcal{E}$ are kept in column form.
	\[
	Q=\left(
	\begin{array}{ccc}
	-2  & 1 & 1 \\                
	1 & -2 & 1 \\
	1 & 2 & -3 \\
	\end{array}
	\right) ,\quad B=\left(
	\begin{array}{ccc}
	-3  & 1 & 1 \\                
	1 & -3 & 1 \\
	1 & 2 & -5 \\
	\end{array}
	\right) ,\quad\mathcal{E}=\left(
	\begin{array}{ccc}
	1  \\
	1\\
	2\\
	\end{array}
	\right).
	\]
	The rank of the controllability matrix
	$[\mathcal{E}, B\mathcal{E}, B^2\mathcal{E}]
	$ is 2, which is easily verified numerically by calculating the rank of matrix
	
	\[
	\begin{pmatrix}
	1  &   0  &  -7\\
	1  &   0  &  -7\\
	2 &   -7  &35
	\end{pmatrix}.
	\]
	Therefore, with a single Brownian motion, and two agents having the same learning rates we obtain a manifold collapse: the covariance matrix is singular.  In effect, with common noise agents 1 and 2 synchronize so that their opinions move together after some time period. Agent 3, however, is still learning at a different rate. The asymptotic covariance matrix is singular.
\end{example}
It seems in the preceding example, two agents having similar learning rates led to the covariance matrix being singular. On the other hand, let us consider another case with different learning rates.

\begin{example}\label{ex:fullrank}
	\[
	Q=\left(
	\begin{array}{ccc}
	-2  & 1 & 1 \\                
	1 & -2 & 1 \\
	1 & 2 & -3 \\
	\end{array}
	\right) ,\quad B=\left(
	\begin{array}{ccc}
	-3  & 1 & 1 \\                
	1 & -5 & 1 \\
	1 & 2 & -5 \\
	\end{array}
	\right) ,\quad\mathcal{E}=\left(
	\begin{array}{ccc}
	1  \\
	3\\
	2\\
	\end{array}
	\right).
	\]
The resulting controllability and asymptotic covariance (grammian) matrix are
	\[
	\begin{pmatrix}
	1	&2	&-17\\
	3	&-8	&33\\
	2	&-3	&1
	\end{pmatrix}, \quad
	C^*=\begin{pmatrix}
	0.99   & 1.39    &1.09\\
	1.39    &2.18    &1.67\\
	1.09    &1.67    &1.29
	\end{pmatrix}
	\]
	and have full rank. 
\end{example}

For a common noise, intuition may lead us to conclude that dynamics are also one dimensional. But this is not necessarily true. Although, if any of the agents learn at similar rates it is still possible that the phase space is a strict subset of $\R^n$, where $n$ is the number of agents. Generally, the grammian and controllability condition can help in determining non-singularity of the covariance matrix. However, the analytical condition for how and when the agents synchronize so that the phase space of beliefs live in a lower dimensional linear embedding of $\R^n$ remains an open question. 


\subsection{Lie Algebras and conditional density}
The controllability matrix can be thought of as a Lie bracket \cite{kashima2016noise}. The connection between Lie brackets and conditional densities received a great deal of attention in the early eighties, see \cite{brockett2014early} for a good survey. Lie brackets are used if $B$ and $\mathcal{E}$ are nonlinear functions \cite{whalen2015observability}. Though for this article, we will only examine linear drift and diffusion matrices. While our perspective is on consensus, control theory literature focuses on the network aspect of controllability: what types of networks are controllable and observable? These aspects are tangential to our main focus on agent behaviour. Control theory provides the techniques to steer a network of agents to a desired value or to control the system's outcome. Several engineering questions may need to be addressed in this case. For example, how many agents or nodes are needed for the system to be controllable. Another aspect would be to control the flow of information.

\section{Broken Detailed Balance and Non-equilibrium Steady States}
\label{s: ness}


Now we report the broken-detailed balance phenomenon arising in our continuous-time opinion dynamics model with noisy feedback. We will demonstrate that even in a simple two-agent learning dynamics, where both agents are susceptible to a common noise source, there is an asymptotic stationary distribution with broken detailed balance. Thus, this simple opinion dynamics gives rise to a non-equilibrium steady state with a non-zero asymptotic probabilistic current loop discussed in Refs.\cite{Mellor:2016aa, Mellor:2017aa, Zia:2007he}. 

For two learning agents, the stochastic dynamics with a common noise and with the proposed truth $\bar \sigma = 0$ follows from Eq. (\ref{eqn: SDE_WhiteNoise}):
\begin{align}\label{eqn: 2agent_dyn}
d X^1 &=  Q_{12}(X^2 - X^1) - \varepsilon_1 X^1 + \varepsilon_1 dW_t\\
d X^2 &=  Q_{21}(X^1 - X^2) - \varepsilon_2 X^2 + \varepsilon_2dW_t, \nonumber
\end{align}
where $dW_t$ is a \emph{one dimensional} Brownian motion.

Following the notation in \autoref{s: noisy}, this can be rewritten as \begin{equation}\label{eqn: 2d_dynamics}
dX = BX + \Delta \mathcal{E}  {\bf \Gamma}_t dt,
\end{equation}
where ${\bf \Gamma}_t =   \left( \begin{smallmatrix}\Gamma_t \\ \Gamma_t\end{smallmatrix}\right)$ is a Gaussian white noise uncorrelated in time, $\Delta$ is a scaling of the Brownian noise, $B = Q - \mathcal{E}$, with $Q_{11} = - Q_{12},$ $Q_{22} = - Q_{21}$, and  $\mathcal{E} = \text{diag}(\varepsilon_1, \varepsilon_2) = \left( \begin{smallmatrix}\varepsilon_1 & 0 \\ 0 & \varepsilon_2 \end{smallmatrix}\right).$ 

Unlike usual statistical mechanics problems where the noise in each coordinates are typically assumed uncorrelated, the noise in each coordinate here is {\it identical}, encapsulating the situation that the two learning agents are influenced by a common source of noisy information. Nevertheless, we know the asymptotic distribution generated from the above multivariate Ornstein-Uhlenbeck process is a Gaussian distribution with the covariance satisfying Lyapunov equation \eqref{eqn: lyapunoveqn}, see also the derivations in \cite{Risken:1984df,Van-Kampen:1992tk,Gardiner:1985qr} :
\begin{equation}\label{eqn: Lyaupunov}
BC^* + C^* B^T  + D = 0,
\end{equation}
where $D$ is the noise correlation matrix whose matrix elements are defined by
\begin{align*}
D_{ij}\delta(t-t') &= 
\Delta^2\left(\big\langle \mathcal{E}{\bf \Gamma}_t{\bf \Gamma}^T_{t'}\mathcal{E}^T\big\rangle\right)_{ij} \\
&= \Delta^2\varepsilon_i \varepsilon_j \langle\Gamma_t\Gamma_{t'}\rangle \\
&= \Delta^2\varepsilon_i \varepsilon_j \delta(t-t'). 
\end{align*}
For standard Brownian motion and in the subsequent analysis we set $\Delta=1$,
so we identify  (as in \eqref{eqn:cov1})
\begin{equation}\label{eqn: Diffustion Matrix}
D_{ij} = \varepsilon_i \varepsilon_j\Delta^2=\varepsilon_i \varepsilon_j.	
\end{equation}

For two dimensional dynamics, the stationary covariance 
$C^*$
is exactly solvable, see  Ref. \cite{Gardiner:1985qr}, and given by 
\begin{equation}\label{eqn: 2dcovariance}
C^* \equiv \frac{\left(\text{det}B\right)D + \left[B - \left(\text{tr}B\right)I\right]D\left[ B - \left(\text{tr}B\right)I\right]^T}{2\left(\text{tr}B\right)\left(\text{det}B\right)},
\end{equation}
where $I$ is the $2 \times 2$ identity matrix.

In our case, $\text{tr}B =- \left( \varepsilon_1 + \varepsilon_2 + Q_{12} + Q_{21}\right)$ and $\text{det}B = \left( \varepsilon_1 \varepsilon_2 + \varepsilon_2 Q_{12} + \varepsilon_1 Q_{21} \right)$. Straightforward calculations give
\begin{equation}\label{eqn: covariance_st}
C^* =  \frac{1}{2\text{tr}B}\left( 
\begin{matrix} 
\text{det}B + \varepsilon_1^2 & 
\text{det}B + \varepsilon_1\varepsilon_2 \\ 
\text{det}B + \varepsilon_1\varepsilon_2&
\text{det}B + \varepsilon_2^2
\end{matrix}
\right),	
\end{equation}
whose eigenvectors and the corresponding eigenvalues are, respectively, 
\begin{align}
V_{\pm} &=
\left( 
\begin{matrix} \frac{\veps_1^2-\veps_2^2 \ \pm \ \sqrt{4\left(\text{det}B + \veps_1\veps_2\right)^2 + \left(\veps_1^2-\veps_2^2 \right)^2}}{2(\text{det}B + \veps_1\veps_2)} \\ 1 \end{matrix}
\right), \\ 
\lambda_{\pm} &= \frac{2\text{det}B + \veps_1^2 + \veps_2^2 \ \pm \ \sqrt{4\left(\text{det}B + \veps_1\veps_2\right)^2 + \left(\veps_1^2-\veps_2^2 \right)^2}}{4\text{tr}B}.
\end{align}

The correlation matrix, its eigenvalues, and eigenvectors determine the profile of the stationary Gaussian distribution

\begin{equation}
p_{st}(X) = \frac{1}{2\pi\sqrt{ \text{det}C^* }}\exp \left[ -\frac{1}{2}X^T C^{*-1} X\right].
\end{equation}

It is interesting to note that when both agents learn at identical rate $\varepsilon_1 = \varepsilon_2 = \varepsilon$, the stationary covariance matrix $C^*$ becomes degenerate with the eigenvectors $\left( \begin{smallmatrix}-1 \\ 1 \end{smallmatrix}\right)$ and $\left( \begin{smallmatrix}1 \\ 1\end{smallmatrix}\right)$ corresponding to the eigenvalue $0$ and $\varepsilon$, respectively. This is because the noises cancel out in the $(X^1 - X^2)$ coordinate and Eq.(\ref{eqn: 2d_dynamics}) gives
$\frac{d}{dt} (X^1 - X^2) = - \left(Q_{12}+Q_{21} + \varepsilon \right)(X^1 - X^2),$
meaning that both agents will eventually synchronize at $X^1 = X^2$ since $\left(Q_{12}+Q_{21} + \varepsilon \right) > 0.$ Consequently, when $\veps_2 = \veps_2 = \veps$, the stationary distribution collapses onto a one-dimensional manifold defined by the line $X^1=X^2$, with the variance given by the eigenvalue $\veps$ corresponding to the eigenvector $\left( \begin{smallmatrix}1 \\ 1\end{smallmatrix}\right)$. 

On the other hand, when $\veps_1 \neq \veps_2$, the stationary distribution is no longer constrained on a one-dimensional synchronization manifold $X^1=X^2$; in addition, it is also a non-equilibrium steady state (NESS)\cite{Zia:2007he}. To see this, we first rewrite the Fokker-Planck equation as
\begin{equation}
	\frac{\partial p}{\partial t}=-\nabla \cdot (BXp -\frac{1}{2} \nabla \cdot (\Sigma p) ) = -\nabla \cdot {\bf J}\label{eqn:fokkerplanck2},
\end{equation}
where we identify the probabilistic flux (current) as
\begin{equation}\label{eqn:flux}
	{\bf J}(X,t) = (BX)p(X,t) -\frac{1}{2} \nabla \cdot \left[\Sigma p(X,t)\right].
\end{equation}
In the coordinate form, the stationary state flux is
\begin{equation}\label{eqn: prob_current}
J_{st}^i(X) = \left[ \left( BX  \ \right)^i - \frac{1}{2}\sum_j\left( D_{ij} \partial_{X^j}\right) \right] p_{st}(X).
\end{equation}
By conservation of probability, $0 = \lim_{t \rightarrow \infty} \partial_tp(X,t) = \partial_t p_{st}(X) = \nabla \cdot {\bf J}_{st}(X). $
Thus, at the stationary state, either ${\bf J}_{st}(X) = {\bf 0}$ or ${\bf J}_{st}(X) = \nabla \times {\bf f}$ for some non-zero flow field ${\bf f}$. The former reflects the detailed balance condition, in which the net probabilistic flux flowing in and out of a volume element surrounding any point $X$ is zero. The latter corresponds to a broken detailed balance, in which non-zero probabilistic current {\it loop} persists in the steady states. This current loop is a feature of a NESS \cite{Zia:2007he} which naturally appears in diverse non-equilibrium systems \cite{Zia:2007he, Mellor:2016aa, Mellor:2017aa, Gnesotto_2018, PhysRevE.97.052121, zhang2017landscape}. Under what conditions NESS exists and is stable is an interesting and complementary feature that is under active research  recently \cite{liverpool2020steady}.

\begin{figure}
	\centering
	\includegraphics[width=1 \linewidth]{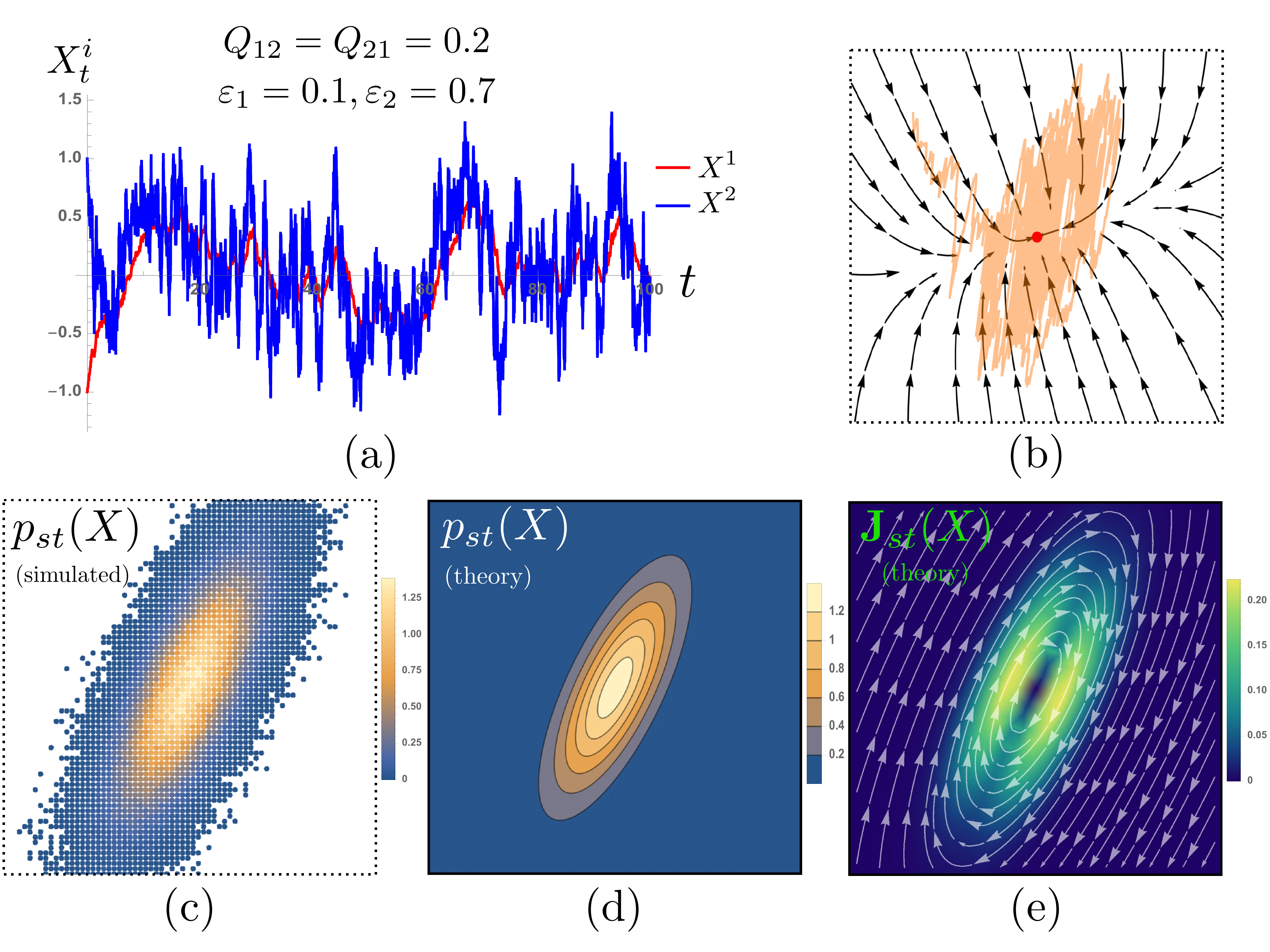}  
	\caption{Non-equilibrium steady state (NESS) associated with 2-agent opinion dynamics under a common noise with $Q_{12} = Q_{21} = 0.2, \veps_1 = 0.1, \veps_2 =0.7, \Delta = 1.$ As opposed to when $\veps_1 = \veps_2$, in which the long-time dynamics collapse onto a one-dimensional synchronization manifold $X^1 = X^2$, the dynamics here is fully two-dimensional with highly correlated agents' opinions. (a) show correlated dynamics generated from Eq. (\ref{eqn: 2agent_dyn}) with the initial condition $X_0 = (-1,1)$, using stochastic Runge-Kutta algorithm, while (b) visualizes the trajectories (orange) in $X^1,X^2$ phase space. Black arrows are the vector field associated with deterministic dynamics $d X/dt = BX$ of Eq. (\ref{eqn: 2d_dynamics}). For the parameters considered here, the fixed point (red) at $(0,0)$ is stable, with the linear stability matrix $B$ having all real negative eigenvalues without imaginary parts, i.e. $\sqrt{\text{tr}^2B - 4\text{det}B} = 0.52 >0$. However, the interplay between deterministic relaxation and flucutations due to a common noise leads to a NESS with a non-zero probabilistic current loop (e). The stationary density (histogram) shown in (c), constructed from sampling $2\times 10^3$ independent realizations of the steady state, agrees well with the predicted Gaussian distribution with the covariance $C^*$ given by (\ref{eqn: covariance_st}), which encodes the correlation $\E [X^1_sX^2_t].$ The eigendirection with the largest eigenvalue of $C^*$ rotates away from the synchronization manifold $X^1=X^2$ toward the axis of a faster learning agent (towards $X^2$-axis), reflecting that a faster learner is more susceptible to a common noise.  Lastly, the stationary flux ${\bf J}_{st}(X)$ of Eq.(\ref{eqn: prob_current}) encodes two-time correlation function $\E [X^i_s X^j_t]$. The vector (flux) field (white arrow) exhibits a non-zero current loop, reflecting a broken detailed balance. The colours encode the strength of the flux. Figures (b)-(e) are shown on the axes' scale $[-1.5,1.5]\times [-1.5,1.5]$.}
	\label{fig: NESS}
\end{figure}%

As a consequence of  broken detailed balance, the NESS in our two-agent model exhibits correlated oscillation of agents' opinions, despite the imaginary part of the eigenvalues of $B$ being zero, i.e. $ \sqrt{\text{tr}^2B - 4\text{det}B} \in {\rm I\!R}$. This differs from the well-known mechanism of noise-induced oscillation that arises when the imaginary parts of the linear stability matrix are non-zero, such as in the predator-prey model \cite{McKane2005}. See Fig. \ref{fig: NESS} for a more detailed discussion.
\clearpage
\section{Discussion and Outlook}
Studying SDEs with a common noise presents considerable difficulties. To ensure that the process still possesses a regular density, we have to calculate a controllability matrix and be aware of subtle mathematical issues. In mathematics papers, the object of study is the SDE itself and what conditions ensure regularity of the transition and asymptotic densities. However, in many of these papers no specific example is given. This paper provides interesting examples within the context of a modelling framework where such regularity conditions are needed. In many instances, theoretical papers fail to provide a strong physical interpretation. By recasting our problem as social learning, we are able to link with theoretical aspects and provide simple yet insightful examples. The connection with the work of \cite{Zia:2007he, Mellor:2016aa} is even stronger; the dynamic social learning models developed here provide specific instances of non-equilibrium steady states (NESS). Consequently, our work demonstrates a class of simple social learning models that possesses a NESS.  A potential future source of work could be to fractional brownian motion \cite{eab2018ornstein, zhang2014stochastic, ren2007asymptotic, li2019stochastic, gajda2014fokker}. Though the interpretation for financial markets seems unclear and would have to be modelled correctly in the context of opinion dynamics. A natural extension would be to consider costs and time-varying network topologies of trust matrix $A$ \cite{MooreNewmanPRE2004, zhang2017random, weber2019deterministic}. Indeed, this would open up connections with mean-field limits  of interacting agents \cite{ieda2011modeling, jabin2017mean}. Naturally, the question then becomes of what type of interaction structure is suitable as the number of agents tend to infinity.

One implication from our analysis is clear. Being a faster learner with higher learning rate $\varepsilon$ translates into more sensitivity to noise.  Is this a good property? In financial markets, for example, where agents are all learning and revising their quotes for a stock, the faster learner bears a cost in adjusting her opinion too frequently. At times the noise doesn't convey any real information.  At least that is the case when all agents are hit by a common source of noise. From our analysis of the Kalman controllability matrix, it is evident that even if some agents use the same learning rates, there could be a collapse of the asymptotic density to a lower dimensional subspace of $\R^n$. Due to the nature of interaction, as long as agents are minimally interacting, a few can be stubborn and not learn with $\varepsilon=0$. This, however, doesn't impact the whole system dynamics converging to a stationary measure. When there are more sources of uncertainty, it is not clear whether being a faster or slower learner is better.  Thus far, we have not considered costs that agents bear when they update and revise. \ref{appendix a} gives detail on possible ways a cost function may be constructed. The rapid electronification of markets where interaction is highly visible presents new challenges to existing models.

\section*{Acknowledgements}

Georgios Piliouras acknowledges MOE AcRF Tier 2 Grant 2016-T2-1-170,  grant PIE-SGP-AI-2018-01, grant NRF2019-NRF-ANR095 ALIAS and NRF 2018 Fellowship NRF-NRFF2018-07. Tushar Vaidya acknowledges the SUTD President's Graduate Fellowship. Thiparat Chotibut would like to thank Shaowei Lin for helpful discussions, and acknowledge the financial support from SUTD-ZJU grant
ZJURP1600103 and Chulalongkorn University's new faculty start-up grant. 

\bibliography{mybibfile}

\appendix
\section{Game Theoretic Aspects of Degroot Learning}\label{appendix a}
Following Refs. \cite{BINDEL2015248, ghaderi2014opinion}, we discuss the connection of the DeGroot model with and without experts (stubborn agents) to game theoretic concepts in economics and computer science.  Nash equilibria is a key concept in economics but increasingly in computer science as it has implications for equilibrium and computational complexity.  
Here we give a heuristic argument of how our update rule (\ref{eqn:model1}) may arise with a given cost function. 

Opinions are continuous between $[0,1]$ and each agent has an initial opinion. If we have a cost function
	\[
	C_i(X^i)= \sum_{j=1}^nA_{ij}\left( X^j-X^i \right)^2 - \varepsilon_i \left( \ms - X^i \right)^2
	\]
then it is optimized when the first order derivative equals zero with respect to $X^i$. The neighbours of agent $i$ or those with positive weight: $A_{ij}>0$. The first order conditions from above give us the following updating rule (best response strategy):

\[
X^i_{t+1}= \sum_{j}^n A_{ij}X^j_t+\varepsilon_i(\ms - X^i_t). 
\]

Precisely, this is our original update rule in discrete time. We assume that agents have self-belief $A_{ii}>0$ and learning rates are positive $ \varepsilon_i \in (0,A_{ii}) $, the same conditions as in \cite{vaidya2018learning}. Our Degroot updaters are myopic; they only care for the one period ahead forecast and cost. If all agents are following this rule, which is the case in our model, then it is a coordination game and everybody reaches the unique Nash equilibrium.

\[
\lim_{t\rightarrow\infty} X^1_t=X^2_t=\cdots=X^n_t.
\]

The agents don't have an incentive to deviate from their best response. Speed of convergence and stubborn agents are discussed at length in \cite{ghaderi2014opinion} $[Lemma \, 1]$, where consensus is defined to be a convex combination of initial opinions of stubborn players. Actually our model reduces to the model of stubborn agents. If we only have one stubborn agent $X^S$, then everybody converges to this opinion; i.e., $\lim_{t \rightarrow \infty}X^i_t = X^S$. While convergence time is interesting, our focus is on the connections to game theoretic ideas. The fact we have noise complicates the picture and the concept of Nash equilibrium in its usual static case needs to be reinterpreted in terms of the mean of the stationary distributions.

Averaging opinions in social networks can be seen as optimal in myopic sense. Agents optimize their own costs. If a social planner was to optimise the total costs then perhaps another strategy and optimal solution may exist. In this case, the \textit{price of anarchy} captures how selfish agents may guide the system to one equilibrium that may not be efficient from society's point of view. In actuality, for the dynamics we have taken in the main body of paper, the agents are myopic but in financial markets there is no central planner. Prices of assets are determined by interaction and complex dynamics. The very essence of free markets means we take optimality to be the price or consensus value determined by interacting players.

\section{Existence of Solutions and the Boltzmann Distribution}\label{appendix b}
The Fokker-Planck equation is the derivative of the transition density with respect to time and is otherwise known as the forward Kolmogorov equation \cite{thomas2019phase}.  Since the dynamics are specified by an SDE, which is a Gaussian process in our case, the transition density is Gaussian and so is the asymptotic density $p_{st}$. However, we cannot solve the long-term covariance matrix $C^*$ from Lyapunov equation \eqref{eqn: lyapunoveqn} analytically. One way to examine the long-term density is to investigate the PDE.  Factoring out the grad $\nabla$ operator, the Fokker-Planck equation \eqref{eqn:FokkerPlanck1} reads
\begin{equation}
	\frac{\partial p}{\partial t}=\nabla \cdot (-BXp +\frac{1}{2} \nabla \cdot (\Sigma p) );
\end{equation}
with the initial condition
\begin{equation}
p(X,0)=p_0(X),	
\end{equation}
and the finite support condition
\begin{equation}
\lim_{|X|\rightarrow\infty}p(X,t)=0.
\end{equation}
There is a decaying condition at infinity, there is no boundary as such. It ensures that the solution to the Fokker-Planck equations remain Gaussian. Let us assume the initial condition is  distributed according to $p_0(X)$.  A solution to the Fokker-Planck equation provided it exists is a function in $C^{2,1}(\R^n,\R^+)$, twice differentiable in $x$ and once in $t$. The equations above seem to suggest a divergence form. Let us abbreviate the drift which is $BX\in \R^n$ as $\textbf b(X)\in \R^n$. Furthermore, as $B$ is constant its constants can be embedded in a new drift function $\textbf b:\R^n \to \R^n$ and the SDE model is recast as
\[
dX_t=\textbf b(X_t) dt+\mathcal{E}dW_t.
\]
The diffusion matrix 
\begin{equation}
\Sigma=\mathcal{E}\mathcal{E}^T
\end{equation}
is symmetric and non-negative. In our case it is a diagonal as the learning rates matrix $\mathcal{E}$ is strictly positive. To ensure that a solution to the initial value Fokker-Planck equation exists for all times $t >0$ and is unique, we need a condition \cite{evans2012introduction}. 
\begin{assumption}[Uniform Ellipticity]
	The diffusion matrix is uniformly positive and there exists a constant  $\theta >0$, such that
	\[
	Y^\top \Sigma(X) Y \geq \theta \norm{Y}^2, \quad \forall Y \in \R^n
	\]
	uniformly for $X \in \R^n$.
\end{assumption}

This condition is necessary to ensure the existence of a transition probability density. Moreover, the assumption is equivalent to $\Sigma$ being non-singular. For the case of the diffusion matrix $\Sigma$ being constant, the uniform condition is trivially satisfied. However, earlier we assumed that the learning rates have to be strictly positive. The ellipticity assumption  is conveying something a bit stronger. Learning rates can't be arbitrarily small.  In a latter section, we will discuss the issue of a transition density still existing even when $\Sigma$ is singular and introduce the notion of controllability. 

\begin{proposition}
	Uniform ellipticity implies learning rates $\mathcal{E}>\theta$, for some constant $\theta>0$.
\end{proposition}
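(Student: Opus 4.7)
The plan is to reduce the matrix inequality in the uniform ellipticity assumption to a pointwise lower bound on the diagonal entries of $\mathcal{E}$ by testing the quadratic form against standard basis vectors. Since $\mathcal{E} = \operatorname{diag}(\varepsilon_1,\ldots,\varepsilon_n)$ with $\varepsilon_i > 0$, the diffusion matrix has the explicit form $\Sigma = \mathcal{E}\mathcal{E}^\top = \operatorname{diag}(\varepsilon_1^2,\ldots,\varepsilon_n^2)$, which is constant in $X$, so the spatial dependence in the ellipticity assumption is irrelevant and the inequality reduces to the algebraic statement $Y^\top \Sigma Y \geq \theta\|Y\|^2$ for every $Y \in \R^n$.

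The first step is to expand the quadratic form explicitly as $Y^\top \Sigma Y = \sum_{i=1}^n \varepsilon_i^2 Y_i^2$. Next I would specialize the test vector $Y$ to each standard basis vector $e_i$, for which $\|e_i\|^2 = 1$ and $Y^\top \Sigma Y = \varepsilon_i^2$; this immediately gives $\varepsilon_i^2 \geq \theta$ for every index $i$. Taking square roots (using $\varepsilon_i > 0$) produces the uniform lower bound $\varepsilon_i \geq \sqrt{\theta}$ for all $i$, which is the intended meaning of the matrix inequality $\mathcal{E} > \theta$ (interpreted entrywise on the diagonal, or equivalently as a spectral bound since $\mathcal{E}$ is diagonal). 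Setting $\theta' \coloneqq \sqrt{\theta} > 0$ yields the conclusion that no learning rate can be arbitrarily small.

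There is no real obstacle here; the proof is essentially just the observation that for a diagonal matrix, uniform positive definiteness on all of $\R^n$ and a uniform lower bound on its diagonal entries are the same condition. The only subtlety to flag is the slight abuse of notation in the statement, namely that $\mathcal{E} > \theta$ should be read as a lower bound on each $\varepsilon_i$ (equivalently on the smallest diagonal entry), and I would make this explicit in the write-up so the conclusion matches the hypothesis introduced in the ellipticity assumption.
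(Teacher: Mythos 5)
Your proof is correct. The paper itself states this proposition without any proof at all---the only computation it offers in the vicinity is the \emph{converse}-direction sanity check that $\Sigma=\mathbbm{I}$ satisfies uniform ellipticity for $\theta\in(0,1)$---so there is no argument in the paper to compare against; your write-up simply supplies the missing justification. The route you take is the natural one: for diagonal $\mathcal{E}$ one has $\Sigma=\mathcal{E}\mathcal{E}^{\top}=\operatorname{diag}(\varepsilon_1^2,\dots,\varepsilon_n^2)$, and testing $Y^{\top}\Sigma Y\geq\theta\norm{Y}^2$ on the standard basis vectors gives $\varepsilon_i^2\geq\theta$, hence $\varepsilon_i\geq\sqrt{\theta}>0$ for every $i$ (equivalently, $\theta\leq\lambda_{\min}(\Sigma)=\min_i\varepsilon_i^2$). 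Your flag about the notation is also the right one to raise: the constant in the conclusion ``$\mathcal{E}>\theta$'' cannot literally be the same $\theta$ as in the ellipticity assumption, but since the proposition only asserts existence of \emph{some} positive constant, renaming it $\sqrt{\theta}$ closes the statement; the substance---that uniform ellipticity of a diagonal $\Sigma$ is equivalent to a uniform positive lower bound on the learning rates---is exactly what the paper intends.
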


So in fact, if $\mathcal{E}$ not greater than zero, then some agents are not learning and a solution to  \autoref{eqn:FokkerPlanck1} may not exist. Further conditions are needed for the solution to be unique and regular with respect to the Lebesgue measure in $\R^n$. When the learning matrix is an identity matrix, $\Sigma=\mathbbm{I}$,  uniform ellipticity is satisfied for some $\theta \in (0,1)$. To see this, observe

\[
Y^\top\Sigma Y=Y^\top \mathbbm{I}Y=\norm{Y}^2>\theta \norm{Y}^2.
\]


\begin{definition}\label{eqn: fluxdef}
	The probability flux (current) is the vector
	
	\[
	\textbf{J}(X,t) \coloneqq\textbf b(X)p -\frac{1}{2}\nabla \cdot (\Sigma(X)p).
	\]
\end{definition}
This flux is seen in the earlier version as \eqref{eqn:fokkerplanck2}, which allows one to write the Fokker-Planck equation in the conservation law form as 
\begin{equation}
\frac{\partial p}{\partial t}+\nabla \cdot \textbf J=0.
\end{equation}
At the stationary state $\frac{\partial p_{st}}{\partial t}=0$, which means either ${\bf J}_{st}$ the stationary flux is zero and detailed balance is preserved or ${\bf J}_{st}=\nabla \times \textbf f \neq 0$ with a broken detailed-balance; an example of this in a two-player scenario provided in \autoref{s: ness}.

Suppose we have constant diffusion matrix $\mathcal{E}$ and hence constant $\Sigma$. The running assumption of the diffusion matrix being uniformly elliptic means the inverse $\Sigma^{-1}$ exists. At the steady state, suppose the detailed balance is preserved then the stationary flux must vanish

\[
\textbf b(X)p_{st} -\nabla \cdot (\Sigma p_{st})=0.
\]
Then as $\Sigma$ is constant, the above equation becomes, after distributing the grad operator over $\Sigma p_{st}$ and noting that $\nabla \cdot \Sigma=0$,
\begin{align*}
\textbf b(X)p_{st} -\nabla \cdot (\Sigma p_{st})&=0\\
\textbf b(X)p_{st} -(\Sigma \nabla p_{st} +( \nabla \cdot \Sigma)p_{st})&=0\\
\textbf b(X)p_{st}-\Sigma \nabla p_{st} &=0.
\end{align*}
Informally, the last equation reveals $\frac{\nabla p_{st}}{p_{st}}=\textbf b\Sigma^{-1}$. For the flux to vanish $V=\textbf b\Sigma^{-1}$ is the gradient of a function as the left hand side is; since the curl of a gradient vanishes. And since the right hand side is a gradient, its curl is also zero, implying that
\[
\frac{\partial V^i}{\partial X^j }=\frac{\partial V^j}{\partial X^i }.
\]
Integrating both sides and recalling that $p_{st}$ is a density, so always greater than zero,
\begin{align*}
\int \frac{\nabla p_{st}}{p_{st}}&=\int \textbf b\Sigma^{-1}\\
\log p_{st} &= \int \textbf b\Sigma^{-1},
\end{align*}
which gives the Boltzmann-like distribution for the steady state distribution
\begin{equation}
	p_{st}=\frac{1}{Z}e^{-V},
\end{equation}
where $V= \int \textbf b\Sigma^{-1}=\frac{1}{\Sigma} \int \textbf b$ is treated as a potential function (Hamiltonian) and there is a normalization constant that resembles a partition function $Z=\int_{\R^n} e^{-V}$. The argument is heuristic and we assume all the necessary nice conditions hold for the integration to make sense. 
\section{Convergence to the Stationary Distribution: Optimal Transport Approach}\label{appendix c}
A solution to \eqref{eqn:fokkerplanck3} can be thought of a mapping a path of $\mu_t$ in the space of probability measures $\mathcal{P}_2(\R^n)$ converging to the large-time limit or asymptotic distribution.
Here $\mathcal{P}_2(\R^n)$ plays the role of a space containing probability measures of the diffusion processes with finite second moments. The metric on this space is the Wasserstein distance between measures. We take \cite{bolley2012convergence}'s cue. Recall the definition of Wasserstein distance between two measures $\rho_1, \rho_2$
\begin{definition}
	
	\[
	W_2(\rho_1,\rho_2)=\inf_{\pi \in \Pi(\rho_1, \rho_2)}(\int_{\R^n}\int_{\R^n} |x -y|^2d\pi(dx,dy))^{1/2}
	\]
	with joint distribution $\pi$ having marginals $X\sim\rho_1$ and $Y\sim \rho_2$, and  $\Pi$ the space of all such joint measures. 
\end{definition}
Wasserstein distances metrize weak convergence (distribution convergence) and so it is natural to study the flow to equilibrium in the space of $\mathcal{P}_2(\R^n)$ instead of the process $X_t$. If we write \eqref{eqn:fokkerplanck3} in the SDE form, it reads
\begin{equation}
dX_t =-\nabla V dt+\sqrt{2}dW_t \label{eqn:smoluchowski1}.	
\end{equation}
 Equation \eqref{eqn:smoluchowski1} is an example of an Ornstein-Uhlenbeck process with gradient structure.  Suppose that there are two initial random starting points $X_0$ and $Y_0$ with probability laws $\mu_0,\nu_0$. Furthermore, let $X_t$ and $Y_t$ be solutions to  \ref{eqn:smoluchowski1} with the same Brownian driver.  With

\[
X_t=X_0 e^{Bt} +\int_{0}^{t}e^{B(t-s)}\mathcal{E}dW_s,
\]

$X_t-Y_t$ becomes simply $X_t-Y_t=X_0 e^{Bt}-Y_0 e^{Bt}$ because the diffusions cancel out. Similar to a coupling argument in discrete time Markov chains, our analysis couples the processes $X_t$ and $Y_t$. This type of coupling argument is standard for both discrete and continuous time Markov processes. Notice that time derivative for the squared distance simplifies
\[
\frac{d}{dt}|X_t-Y_t|^2=2(X_t-Y_t)\cdot(BX_t -BY_t)=-2(X_t-Y_t)\cdot(\nabla V(X_t)-\nabla V(Y_t)).
\]
Assume that Lyapunov function is strongly convex.  Denote $\tilde{B}=-B$ and the Lyapunov function as

\[
V=-\frac{1}{2} x^{\top} B x=\frac{1}{2} x^{\top} \tilde{B} x.
\]

Strong convexity is needed here and key in gradient flow type of arguments. Here $\nabla V= \tilde{B}x$ and $\nabla^2 V=\tilde{B}$ is positive definite
\[
\tilde{B} \succeq C\, \mathbbm{1}
\]
for positive real constant $C$ and identity matrix $\mathbf{I}$. Consequently $(\tilde{B}-C\, \mathbf{I})\succeq 0$ from which it follows
\begin{align*}
z^{\top} \tilde{B} z &\geq z^\top C\, \mathbbm{1} z, \quad \forall z \in \R^n \implies\\
z^{\top} \tilde{B} z &\geq C |z|^2 .
\end{align*}
Strong convexity is sometimes referred to as uniform ellipticity. The two conditions are the same. Substitute $(x-y)$ for $z$ and simplify
\begin{align}
(x-y)^{\top} \tilde{B} (x-y)&\geq C |x-y|^2 \nonumber\\
(x-y)^{\top} (\tilde{B}x-\tilde{B}y)&\geq C |x-y|^2\nonumber\\
(x-y)\cdot (\nabla V(x)-\nabla V(y))&\geq C |x-y|^2 \label{eqn:stroncvx}.
\end{align}
Multiplying condition \ref{eqn:stroncvx} by $-2$ we finally get that the time derivative of the coupled squared distance process is

\begin{equation}\label{eqn:coupledSDEdistance}
\frac{d}{dt}|X_t-Y_t|^2=-2(X_t-Y_t)\cdot(\nabla V(X_t)-\nabla V(Y_t)) \leq -2C|x-y|^2 \quad \forall x,y \in \R^n.
\end{equation}
This is a simple ODE with starting point $(X_0-Y_0)^2$ so integrating with respect to time and taking expectations yields
\begin{align*}
|X_t-Y_t|^2 &\leq e^{-Ct}|X_0-Y_0|^2\\
\E |X_t-Y_t|^2 &\leq e^{-Ct}\E |X_0-Y_0|^2.
\end{align*}
By definition the Wasserstein distance is the infimum of $\E |X_t-Y_t|^2$ over joint measures satisfying the marginal laws $X_t\sim \mu_t$ and $Y_t \sim \nu_t$. Naturally this implies
\[
W_2^2(\nu_t,\mu_t) \leq \E |X_t-Y_t|^2
\]
and more importantly 
\begin{equation}\label{eqn:Wasscontraction1}
W_2(\nu_t,\mu_t) \leq e^{-Ct} W_2(\mu_0,\nu_0).
\end{equation}
Contraction \ref*{eqn:Wasscontraction1} is vital to prove that in large time the density converges.  So far we have shown that we have a contraction mapping in the space of probability measures with metric $W$.  Villani \cite{villani2009optimal}[Theorem 6.18] shows that the Wasserstein space over a Polish space itself is a Polish space: complete separable metric space. Thus Cauchy sequences converge to a limit in the space of measures.

The interaction matrix being symmetric and $V$ being strongly convex are in some sense restrictive assumptions when modelling coordination games or social dynamics; nevertheless they form the starting point for sophisticated analysis.  For the Smoluchowski model \ref{eqn:smoluchowski1}, we know the stationary distribution $\nu$. Thus the Wasserstein contraction, if we choose $\nu_0$ as the stationary distribution, is
\[
W_2(\mu_t,e^{-V}) \leq e^{-Ct} W_2(\mu_0,e^{-V})  \quad \forall t \geq 0.
\]
Thus $e^{-V}$ is the only stationary distribution. So far the discussion was on drifts that can be expressed in gradient form. For the less restrictive case a more careful examination is needed.
\subsection{Non-symmetric interaction}
When the drift matrix $B$ is allowed to be non symmetric then the techniques to show convergence to a Gaussian distribution are more involved.  However, since matrix $B$ and $\mathcal{E}$ are constant it can be shown that the resulting asymptotic density is unique.

\begin{proposition}\cite{lasota2013chaos}
	Assume $B$ is asymptotically stable and $\mathcal{E}$ is uniformly elliptic.  The limiting function $p_t$ of \ref{eqn:fokkerplanck2} satisfies
	
	\[
	-\sum _{i=1}^{n}{\frac {\partial }{\partial x_{i}}}\left[(BX)_{ii} \, p(X ,t)\right]+\frac{1}{2}\sum _{i=1}^{n}\sum _{j=1}^{n}{\frac {\partial ^{2}}{\partial x_{i}\,\partial x_{j}}}\left[\Sigma_{ij} \,p(X ,t)\right]=0
	\] 		
	and is the unique density satisfying the condition that divergence of the flux is zero.
\end{proposition}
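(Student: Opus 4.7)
The plan is to separate the claim into existence of the long-time limit and its characterization as the unique zero-flux-divergence density. For existence I would work directly from the explicit Gaussian transition density already derived in \eqref{eqn:transitionpdf1},
\begin{equation*}
p(X,t\mid X_0,0) \;=\; \frac{1}{(2\pi)^{n/2}\sqrt{\det C_t}}\,\exp\!\left[-\tfrac12(X - e^{Bt}X_0)^\top C_t^{-1}(X-e^{Bt}X_0)\right],
\end{equation*}
and then send $t\to\infty$. Asymptotic stability of $B$ forces $e^{Bt}X_0\to 0$ at an exponential rate, while the Lyapunov equation $BC^* + C^*B^\top + \Sigma = 0$ together with uniform ellipticity of $\Sigma$ produces a unique positive-definite solution $C^*$ that $C_t$ converges to monotonically. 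Hence $p_t\to p_{st}$ pointwise (and in $L^1$), where $p_{st}$ is the Gaussian with mean zero and covariance $C^*$.

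To verify that $p_{st}$ satisfies the stationary equation $\nabla\cdot \mathbf{J}_{st} = 0$ I would either pass to the limit in the time-dependent Fokker-Planck equation \eqref{eqn:fokkerplanck2} (legitimate because the Gaussian family is jointly smooth in $(X,t)$ and $\partial_t p_t$ vanishes uniformly on compacts in the limit) or substitute $p_{st}$ directly and invoke the Lyapunov identity to collapse the second-derivative term against the drift term. It is worth noting that the vanishing divergence of the flux does not entail vanishing flux: when $B$ is non-symmetric, $\mathbf{J}_{st}$ itself need not be zero, which is precisely the NESS phenomenon of \autoref{s: ness}.

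For uniqueness I would use ergodicity of the Ornstein-Uhlenbeck semigroup $P_t$. Any probability density $q$ satisfying the zero-flux-divergence condition is $P_t$-invariant, so $q(X) = \int p(X,t\mid Y,0)\,q(Y)\,dY$ for every $t$; uniform ellipticity plus elliptic regularity (Weyl-type) ensures $q$ is smooth with finite second moments, so one can insert the explicit Gaussian kernel, send $t\to\infty$, and use dominated convergence — asymptotic stability contracts $e^{Bt}Y\to 0$ uniformly in law and drives the covariance to $C^*$ — to conclude $q = p_{st}$. The main obstacle is this last step in the non-symmetric regime: the Wasserstein-contraction argument from \ref{appendix c} depends on strong convexity of a potential $V$ and does not apply when the drift is not a gradient. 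A viable substitute is either a hypocoercive $L^2(p_{st})$ spectral-gap estimate on the OU generator, or a Harris/Foster--Lyapunov argument using $X^\top C^{*-1} X$ as the Lyapunov function; applying the generator to this quadratic and simplifying via the Lyapunov equation yields a drift condition of the form $\mathcal{L}V \le -cV + K$, which combined with a minorisation on sublevel sets gives geometric ergodicity of $P_t$ to $p_{st}$ and hence uniqueness.
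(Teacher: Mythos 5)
Your argument is correct, but it is worth stating up front that the paper does not actually prove this proposition: it is stated as a citation to Lasota--Mackey and immediately followed by a remark deferring quantitative convergence to functional-analytic methods. So your proposal is not ``the same approach as the paper'' --- it is a self-contained proof where the paper offers none, and it exploits exactly the structure that makes the linear case tractable: the transition density \eqref{eqn:transitionpdf1} is an explicit Gaussian, $e^{Bt}X_0 \to 0$ by asymptotic stability, and $C_t \uparrow C^*$ with $C^*$ the unique positive-definite solution of the Lyapunov equation \eqref{eqn: lyapunoveqn} (positive-definiteness being where uniform ellipticity enters). Your direct substitution of $p_{st}$ into the stationary equation does close via the identity $C^{*-1}B + B^\top C^{*-1} + C^{*-1}\Sigma C^{*-1} = 0$ and $\tr(\Sigma C^{*-1}) = -2\tr B$, and your observation that $\nabla\cdot\mathbf{J}_{st}=0$ does not force $\mathbf{J}_{st}=0$ is precisely the point of \autoref{s: ness}. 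For uniqueness, your kernel argument already suffices: once $q$ is $P_t$-invariant, $q(X)=\int p(X,t\mid Y,0)\,q(Y)\,dY \to p_{st}(X)$ by dominated convergence (the kernel is bounded uniformly for $t\ge t_0$ since $C_t \ge C_{t_0} \succ 0$), so the heavier Harris/hypocoercivity machinery you propose as a fallback is not needed here --- it would only become necessary for nonlinear drifts. The one step you should not gloss is the passage from ``$q$ is a distributional solution of $\nabla\cdot\mathbf{J}[q]=0$'' to ``$q$ is $P_t$-invariant'': this requires elliptic regularity (which you invoke) plus a cutoff/integration-by-parts justification that no boundary terms appear at infinity; it is standard under uniform ellipticity but is the only place where a genuine technical argument, rather than an explicit computation, is required. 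Net effect: your route gives a concrete, checkable proof for the Ornstein--Uhlenbeck case, whereas the paper's citation covers the result in the generality of Lasota--Mackey at the cost of being a black box.
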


To ascertain how fast the transition density in the Fokker-Planck equation reaches the invariant measure, we still need to use sophisticated functional analytical methods from \cite{bolley2012convergence}.

\end{document}